\renewcommand{\vec}[1]{\boldsymbol{#1}}
\let\oldhat\hat
\renewcommand{\hat}[1]{\oldhat{\boldsymbol{#1}}}
\newcommand{\dee}{\ensuremath{\textrm{d}}}
\newcommand{\fdf}[1]{\ensuremath{ \frac{\dee}{\dee #1}}}
\newcommand{\inty}[4]{\ensuremath{ \int_{#1}^{#2} \! #3 \, \dee#4 }}
\newcommand{\field}[1]{\mathbb{#1}}
\newcommand{\ip}[2]{\ensuremath{ \left< \left. #1 \right| #2 \right> } }
\newcommand{\bra}[1]{\ensuremath{ \left< #1 \right| } }
\newcommand{\ket}[1]{\ensuremath{ \left| #1 \right> } }
\newtheorem{definition}{Definition}		
\newtheorem{assumption}{Assumption}
\newtheorem{remark}{Remark}
\newtheorem{theorem}{Theorem}
\newtheorem{lemma}{Lemma}
\newtheorem{proposition}[lemma]{Proposition}
\numberwithin{lemma}{section}
\numberwithin{example}{section}
\numberwithin{equation}{section}
\numberwithin{remark}{section}
\numberwithin{definition}{section}
\numberwithin{assumption}{section}
\numberwithin{figure}{section}
\newtheorem*{definition*}{Definition}		
\newtheorem*{assumption*}{Assumption}
\newtheorem*{remark*}{Remark}
\newtheorem*{theorem*}{Theorem}
\newtheorem*{lemma*}{Lemma}
\newtheorem*{proposition*}{Proposition}
\newtheorem*{corollary*}{Corollary}
\newtheorem*{example*}{Example}
\let\Tr\relax
\DeclareMathOperator{\Tr}{Tr}
\title{Locality of the windowed local density of states }
\author{Terry A. Loring}
\address{Department of Mathematics and Statistics, University of New Mexico, Albuquerque, NM 87131, U.S.A.}
\email{loring@math.unm.edu}
\author{Jianfeng Lu}
\address{Departments of Mathematics, Chemistry, and Physics, Duke University, Durham, NC 27708, U.S.A.}
\email{jianfeng@math.duke.edu}
\author{Alexander B. Watson}
\address{Department of Mathematics, University of Minnesota, Minneapolis, MN 55455, U.S.A.}
\email{watso860@umn.edu}
\thanks{The work of T.~A.~L.~is based on work supported by the National Science Foundation under Grant No. DMS 1700102. The work of J.~L.~is supported in part by the U.S.~National Science Foundation via grant DMS-2012286 and the U.S.~Department of Energy via grant DE-SC0019449. The work of A.~B.~W.~is supported in part by ARO MURI Award W911NF-14-0247. We are grateful to Lucien Jezequel for pointing out an alternative proof which allowed for a considerably weakened regularity hypothesis \eqref{eq:f_regularity} in Lemma \ref{lem:locality_opnorm}.}
\date{\today}
\begin{document}

\begin{abstract}
We introduce a generalization of local density of states which is ``windowed'' with respect to position and energy, called the windowed local density of states (wLDOS). This definition generalizes the usual LDOS in the sense that the usual LDOS is recovered in the limit where the position window captures individual sites and the energy window is a delta distribution. We prove that the wLDOS is local in the sense that it can be computed up to arbitrarily small error using spatial truncations of the system Hamiltonian. Using this result we prove that the wLDOS is well-defined and computable for infinite systems satisfying some natural assumptions. We finally present numerical computations of the wLDOS at the edge and in the bulk of a ``Fibonacci SSH model'', a one-dimensional non-periodic model with topological edge states.
\end{abstract}

\maketitle

\section{Introduction}

The density of states (DOS) is a fundamental concept in condensed matter physics which is crucial for understanding electronic conductivity properties of materials. Roughly speaking, the DOS is the density of electronic states available to be occupied by an electron as a function of energy (ignoring electron-electron interactions). Mathematically, the DOS is the density of eigenvalues of the single-particle electronic Hamiltonian viewed as a function of the spectral parameter. The \emph{local} density of states (LDOS) is the contribution to the DOS from each point in space so that the average of the LDOS over all space equals the DOS at that energy. The LDOS has been used to clarify many phenomena in condensed matter physics. It is an especially important tool for studying systems without translational symmetry such as crystalline materials near defects or edges, disordered materials, and quasicrystals, where the Hamiltonian cannot be diagonalized using Bloch theory.

In this work we propose a generalization of the LDOS which is ``windowed'' with respect to position and energy, called the windowed local density of states (wLDOS). We start by defining the wLDOS for finite tight-binding systems, and in this context we show that the wLDOS reduces to the usual LDOS whenever the position window captures individual sites and the energy window is a delta distribution. We then prove that the wLDOS is local in the sense that it can be computed using a spatial truncation of the Hamiltonian to a neighborhood around each point of interest. Using locality of the wLDOS we then show that the wLDOS is well-defined and computable for a broad class of infinite tight-binding systems. 

We finally present a numerical study of the wLDOS in the bulk and near the edge of a ``quasi-crystalline'' SSH model: a model of a one-dimensional material with no spatial periodicity which nonetheless supports a non-trivial bulk topological invariant and associated edge states. Note that there is no fundamental reason to restrict to a one-dimensional model, since the definition and locality property of the wLDOS are \emph{dimension-independent}. However, in this work we restrict our numerical experiments to one spatial dimension because analogous numerical experiments in higher dimensions will be computationally more intensive and go beyond the scope of this work. 

We are motivated to introduce the wLDOS as an alternative to the standard LDOS for several reasons. One is physical, in that experimental data extracted in spectroscopy is necessarily blurred with respect to energy and position because of the finite resolution of experimental probes. 
This can be seen clearly in 
Figures \ref{fig:nanowire_interface_scan} and \ref{fig:carbon_nanotube_scan}. 
We therefore expect numerical computations of the wLDOS to more closely resemble experimental data than numerical computations of the LDOS. This can be seen in Figure \ref{fig:wLDOS-examples}, where we present numerical computations of the wLDOS for a one-dimensional periodic SSH model and our quasicrystal variant. The energy and position windows used in these computations are shown in Figure \ref{fig:Windows-for-examples}. In our in-depth numerical study in Section \ref{sec:numerical_expts} we use narrower position windows, but even the narrower windows, when placed over sites which are not uniformly spaced, can partially cover more than one site: see Figure \ref{fig:Windows-for-later}.

Another reason to compute the wLDOS rather than the LDOS is numerical. When approximating the DOS one must smooth somewhat in energy to avoid implicitly attempting to compute all of the eigenvalues \cite{lin2016approximating} and the same issue arises when approximating the LDOS. Finally, having the option of a variable spatial window may lead to more flexibility in how one parallelizes LDOS computations, and is convenient for plots of LDOS for systems with irregular positioning of sites and for systems with continuous degree of freedom.

\begin{figure}
\includegraphics{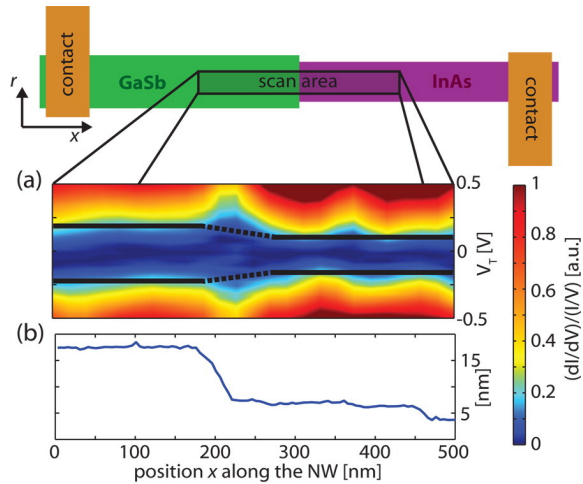}
\caption{An example of spectrocopy revealing the change in bandgap across an interface between two types of nanowire. Panel (a) is a contour plot of experimental measurement of the LDOS as a function of position along the nanowire (horizontal axis) and energy (vertical axis), while panel (b) shows the height of the sample as a function of position. Reprinted with permission from \cite{persson2015scanning} Persson, Olof, et al ``Scanning tunneling spectroscopy on InAs\textendash GaSb Esaki diode nanowire devices during operation'' \protect\href{https://pubs.acs.org/doi/full/10.1021/acs.nanolett.5b00898}{Nano letters 15.6 (2015): 3684-3691} Copyright 2015 American Chemical Society (further permission related
to the material excerpted should be directed to the ACS). \label{fig:nanowire_interface_scan}}
\end{figure}

\begin{figure}
\includegraphics{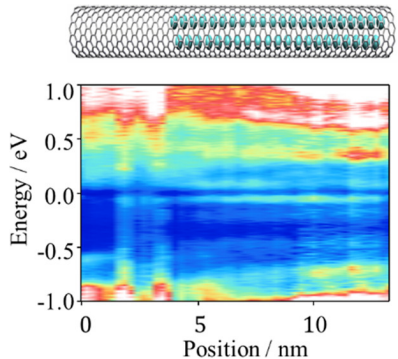}

\caption{An example of the output of scanning tunneling microscopy/spectroscopy measurement of the LDOS performed on an encapsulation of europium nanowires encapsulated in a carbon nanotube. Red and blue correspond to high and low densities respectively. The horizontal yellow streak within the blue area corresponds to localized states with energy within the band gap. Reprinted with permission from \cite{nakanishi2017modulation} Terunobu Nakanishi, Ryo Kitaura, Takazumi Kawai, Susumu Okada, Shoji Yoshida, Osamu Takeuchi, Hidemi Shigekawa, and Hisanori Shinohara, The Journal of Physical Chemistry C 2017, 121 (33), 18195\textendash 18201 DOI: 10.1021/acs.jpcc.7b04047. Copyright 2017 American Chemical Society. \label{fig:carbon_nanotube_scan}}
\end{figure}


\begin{figure}
\centering
\raisebox{4cm}{(a)}\includegraphics[scale=.75]{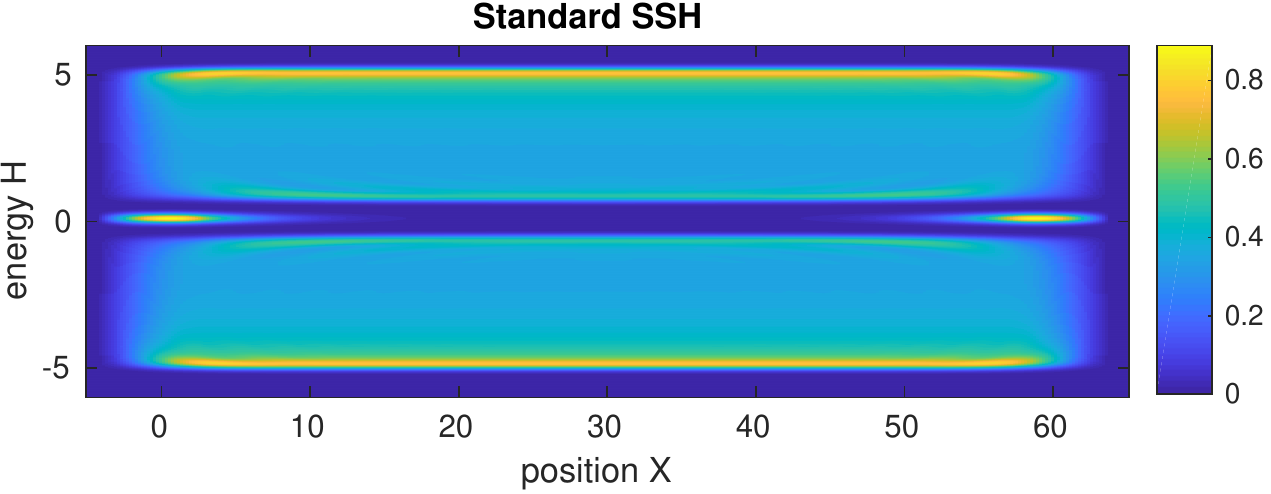} \\
\raisebox{4cm}{(b)}\includegraphics[scale=.75]{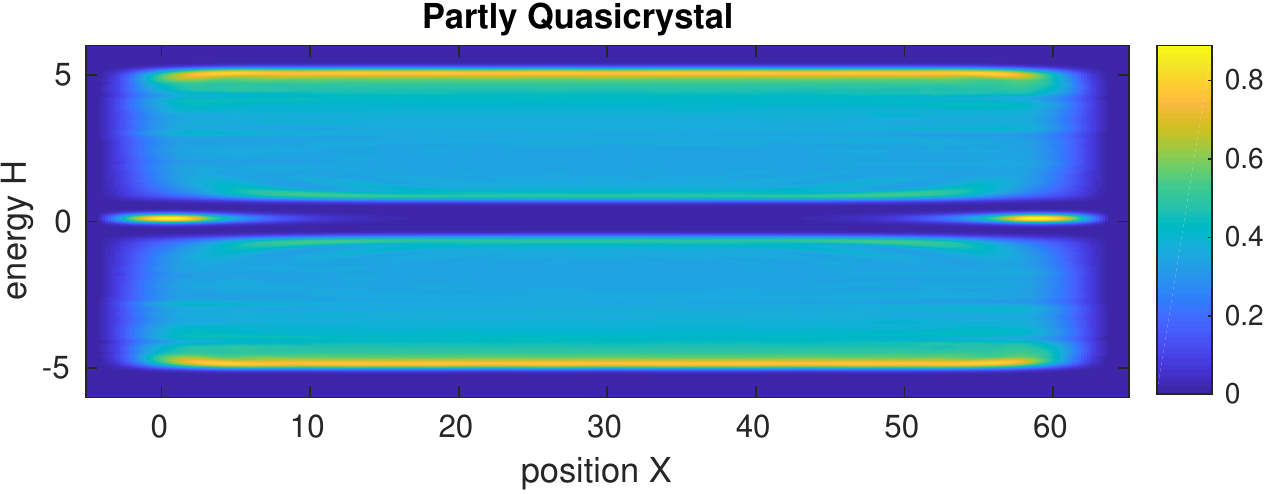} \\
\raisebox{4cm}{(c)}\includegraphics[scale=.75]{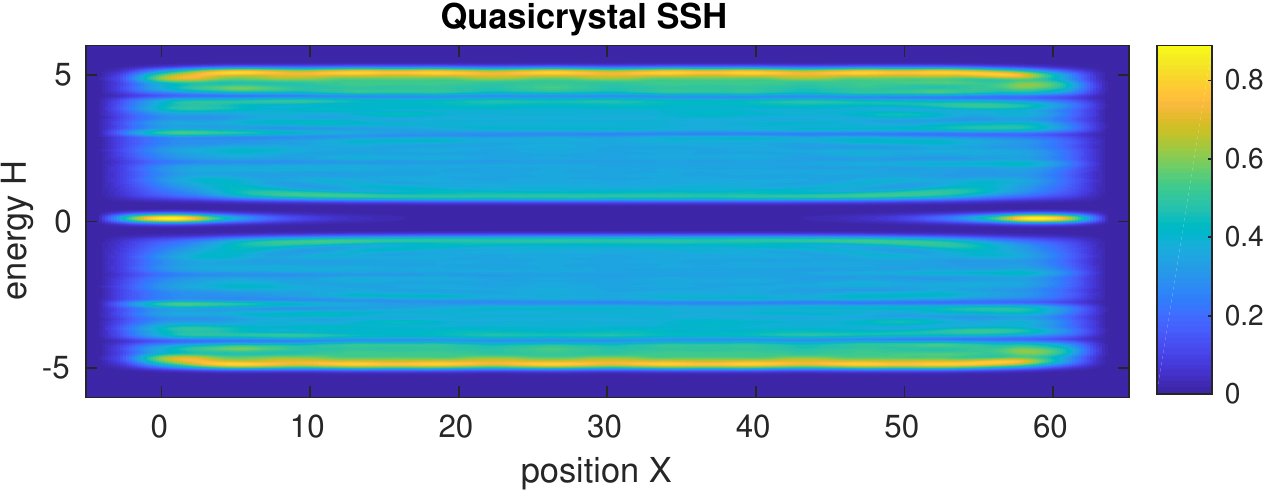} \\

\caption{Numerical computations of the wLDOS for (a) the periodic SSH model, (c) a quasicrystalline variant of the SSH model, and (b) an interpolation between those two models. The energy window is a narrow Gaussian, defined by \eqref{eq:eta_f} with $\eta^{-1} = 9$, while the position window is as in Section \ref{sec:numerical_expts} but scaled to be twice as wide, i.e. to have support $[-2,2]$. The computations show important local spectral features clearly: edge modes with energy in the bulk gap, and gap opening within the bulk bands due to the perturbation which breaks translation symmetry.
\label{fig:wLDOS-examples}}
\end{figure}

\begin{figure}
\centering
\raisebox{4cm}{(a)}\includegraphics[scale=.75]{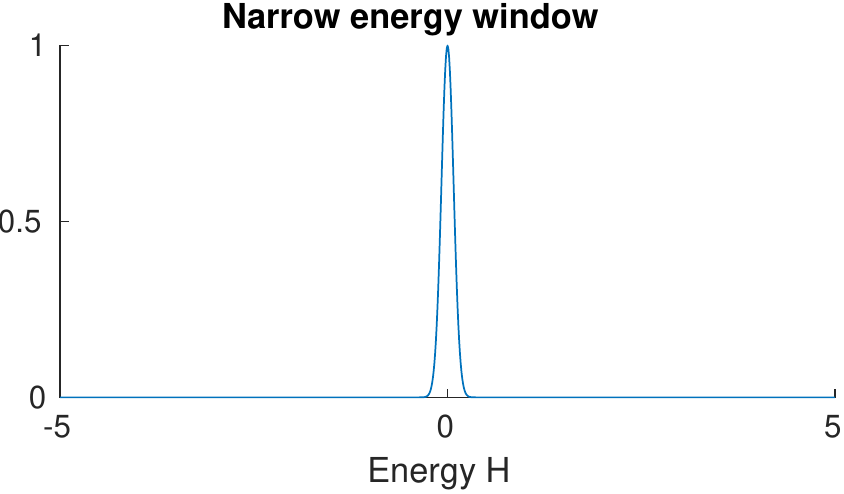} \\
\raisebox{4cm}{(b)}\includegraphics[scale=.75]{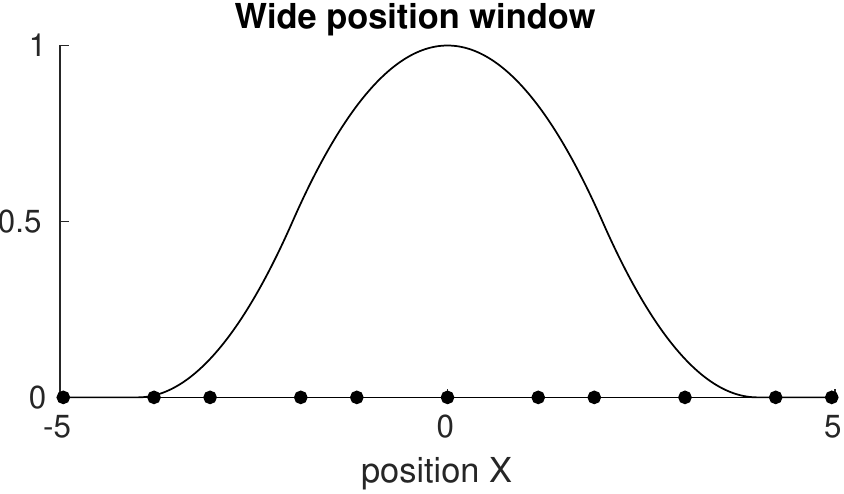} \\
\caption{The energy and position windows used for the wLDOS calculations shown in Figure~\ref{fig:wLDOS-examples}. Dots along the horizontal axis of (b) represent positions of sites in the quasicrystal SSH model. For the definition of the wLDOS and of the window functions, see Section \ref{sec:wLDOS_sec}. Taking a position window function which captures multiple sites models the finite resolution of experimental measurements of the LDOS.
\label{fig:Windows-for-examples}}
\end{figure}

\begin{figure}
\centering
\includegraphics[scale=.75]{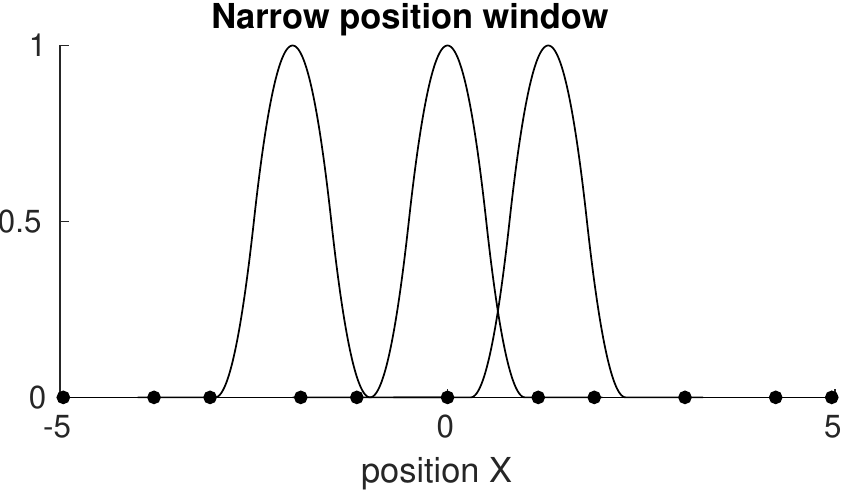}
\caption{The position windows used for the wLDOS calculations in Section \ref{sec:numerical_expts}. Dots along the horizontal axis represent positions of sites in the quasicrystal SSH model. Note that even the narrower position windows can overlap multiple sites. The wLDOS reduces to the LDOS only in the limit where each position window captures precisely one site.
\label{fig:Windows-for-later}}
\end{figure}

\subsection{Related literature}

The derivation of the LDOS is standard in textbooks on condensed matter physics, see for example \cite{kaxiras_joannopoulos_2019}. We do not attempt to summarize the physics literature on, or using, the LDOS, but recall some relevant mathematical works. Massatt, Luskin, and Ortner \cite{2017MassattLuskinOrtner} (see also \cite{doi:10.1137/17M1141035,Carr2017}) proposed a method for computing the DOS of an incommensurate bilayer system by averaging the LDOS over local atomic configurations. In the process they proved locality of the LDOS for that system using resolvent calculus. In contrast, we prove locality of the wLDOS (a generalization of the LDOS) for systems in arbitrary dimensions satisfying more general assumptions via a spectral flow argument.
The idea to use locality of quantum mechanical models to develop schemes for computing quantum mechanical observables is now well-established, see \cite{1999Goedecker,Weinan2010,WeinanLu2011,doi:10.1137/15M1022628,colbrook2019compute}. The electronic properties of one-dimensional aperiodic models have been considered in the physics literature, see \cite{Zhong1991,delaney1998local}.

\subsection{Outline of paper}

We recall usual definitions of the DOS and LDOS, and then define the wLDOS, in the relatively simple case of finite-dimensional tight-binding models in Section \ref{sec:wLDOS_def}. We will then prove locality of the wLDOS for such systems in Section \ref{sec:local_proof} before using this property to extend the definition to a class of infinite-dimensional tight-binding models in Section \ref{sec:infinite_systems}. We will then introduce the quasicrystalline SSH model in Section \ref{sec:Fib_SSH} and present results of our numerical experiments in Section \ref{sec:numerical_expts}.

\section{Windowed local density of states} \label{sec:wLDOS_def}

In this section we will recall standard definitions of the DOS and LDOS, and then introduce the windowed local density of states (wLDOS), in the simplest case of finite-dimensional tight-binding models.

\subsection{Finite-dimensional tight-binding models and standard definitions of the DOS and LDOS} \label{sec:standard_LDOS}

We start by establishing some notation for finite-dimensional tight-binding models. 

We consider sets of $N$ points (which we will refer to as sites) in $\field{R}^d$, with co-ordinates $\vec{x}_n = (x_1,...,x_d)$ for $1 \leq n \leq N$. We consider an electron with $M$ internal degrees of freedom hopping between these sites in the tight-binding approximation, so that the electronic wave-function is an element of the Hilbert space $\field{C}^N \otimes \field{C}^M \cong \field{C}^{\mathcal{N}}$ where $\mathcal{N} = M N$. For the moment we allow the Hamiltonian $H$ of the system to be an arbitrary $\mathcal{N} \times \mathcal{N}$ Hermitian matrix. Let $\delta_n^m \in \field{C}^{\mathcal{N}}$ denote the vector equalling $1$ at site $n$ and internal degree of freedom $m$, and zero in every other entry. For each $1 \leq l \leq d$, we define position operators by
\begin{equation} \label{eq:pos}
    X_l := \sum_{n = 1}^N x_n \sum_{m = 1}^M \ket{\delta_n^m} \bra{\delta_n^m} \quad 1 \leq l \leq d.
\end{equation}

\begin{remark} \label{rem:remark_on_continuum}
We expect that our results can be generalized to continuum PDE models, at the cost of some technical complications. One obvious difficulty is unboundedness of the Hamiltonian, since boundedness is essential at a few points in the present work, e.g. Lemma \ref{lem:locality_opnorm}, Assumption \ref{as:bounded}. We expect this difficulty can be overcome in the same way that we overcome the restriction to finite systems below, by using the fact that the wLDOS depends only on an energy truncation of the Hamiltonian (restriction of the Hamiltonian to the subspace of eigenstates with energy close to the energy of interest).
\end{remark}

The simplest definition of the DOS is directly as the distribution
\begin{equation} \label{eq:phi}
    D(E) := \frac{1}{\mathcal{N}} \sum_{j = 1}^{\mathcal{N}} \delta(\lambda_j - E)
\end{equation}
where $\lambda_j$ are the eigenvalues of $H$ counted with multiplicity \cite{kaxiras_joannopoulos_2019}. The LDOS is defined as follows. Let $\psi_j$, $j \in \{1,...,\mathcal{N}\}$, denote the eigenvector of $H$ corresponding to the eigenvalue $\lambda_j$ (if $\lambda_j$ is degenerate $\psi_j$ is not unique but any choice for $\psi_j$ will do) so that
\begin{equation}
    H \psi_j = \lambda_j \psi_j.
\end{equation}
We can then define using the spectral representation of $H$
\begin{equation} \label{eq:delta_H}
    \delta(H - E) = \sum_{j = 1}^{\mathcal{N}} \delta( \lambda_j - E ) \ket{\psi_j} \bra{\psi_j}.
\end{equation}
Taking the trace of \eqref{eq:delta_H} with respect to the basis $\{ \psi_j \}_{1 \leq j \leq \mathcal{N}}$ gives $\mathcal{N} D(E)$, and hence 
\begin{equation}
    D(E) = \frac{1}{\mathcal{N}} \Tr \delta( H - E ).
\end{equation}
Expanding the trace in the basis $\{ \delta_n^m \}_{1 \leq n \leq N, 1 \leq m \leq M}$  we derive
\begin{equation}
    D(E) = \frac{1}{\mathcal{N}} \sum_{n = 1}^N D_n(E),
\end{equation}
where
\begin{equation} \label{eq:LDOS}
    D_n(E) := \sum_{m = 1}^M \bra{ \delta_n^m } \delta(H - E) \ket{ \delta_n^m }
\end{equation}
is the LDOS defined at each site $n$ \cite{kaxiras_joannopoulos_2019}.


\begin{remark}
For the sake of clarity, we define the DOS and LDOS directly through the delta distribution. The DOS and LDOS can equivalently be defined via their action on test functions, see \cite{2017MassattLuskinOrtner} for example. 
\end{remark}

\subsection{The windowed DOS and LDOS} \label{sec:wLDOS_sec}

We will shortly introduce the main object of study in this work, the windowed LDOS (wLDOS), which is a version of the LDOS which is windowed in both energy and position. For simplicity, we first restrict attention to one spatial dimension. 

Let $f(\xi)$ be a positive function in $L^1(\field{R})$. For example, we can take $f$ to be a normalized Gaussian
\begin{equation} \label{eq:Gaussian}
    f(\xi) = \frac{1}{\sqrt{2 \pi} \sigma} e^{- \frac{ \xi^2 }{ 2 \sigma^2 } }
\end{equation}
for some $\sigma > 0$. Let $H$ be as in section \ref{sec:standard_LDOS}. We now define the windowed DOS (wDOS). 
\begin{definition} \label{def:wDOS}
We define the windowed DOS (wDOS) by 
\begin{equation} \label{eq:wDOS}
    W(E) := \frac{1}{\mathcal{N}} \sum_{j = 1}^N f(\lambda_j - E).
\end{equation}
\end{definition}
Following the argument of the previous section we derive that
\begin{equation} \label{eq:wDOS}
    W(E) = \frac{1}{\mathcal{N}} \sum_{m = 1}^M \sum_{n = 1}^N \bra{\delta_n^m} f(H - E) \ket{\delta_n^m}.
\end{equation}
We therefore define the windowed (in energy) LDOS by
\begin{definition} \label{def:wieLDOS}
We define the windowed (in energy) LDOS by
\begin{equation} \label{eq:wieLDOS}
    \sum_{m = 1}^M \bra{\delta_n^m} f(H - E) \ket{\delta_n^m}.
\end{equation}
\end{definition}
The wDOS and \eqref{eq:wieLDOS} clearly reduce to the standard DOS \eqref{eq:phi} and LDOS \eqref{eq:LDOS} in the limit where $f(\xi) \rightarrow \delta(\xi)$. 

We wish to consider a more general construction where the LDOS is windowed \emph{in position as well as energy}. To this end, let $g(\xi) \in L^1(\field{R})$ satisfy $0 \leq g \leq 1$ and be compactly supported.
For example, we can take $g$ to be a ``bump'' function such as 
\begin{equation} \label{eq:f} g(\xi) = \begin{cases} 0, & \xi \leq -2;
    \\ \frac{1}{2} (\xi + 2)^2, & -2 < \xi \leq -1; \\ 1 - \frac{1}{2}
    \xi^2, & -1 < \xi \leq 1; \\ \frac{1}{2} (\xi - 2)^2, & 1 < \xi \leq
    2; \\ 0, & 2 \leq \xi. \end{cases}
\end{equation}
Let $X$ be the one-dimensional position operator (recall \eqref{eq:pos}) 
\begin{equation} \label{eq:pos_R1}
    X = \sum_{n = 1}^N x_n \sum_{m = 1}^M \ket{ \delta_n^m } \bra{ \delta_n^m },
\end{equation}
where $x_n, 1 \leq n \leq N$ are the positions of each site. Then define for any real $E, x$
\begin{equation} \label{eq:mLDOS_1d}
    F_{E,x}(H,X) :=  g^{\frac{1}{2}}(X-x) f(H-E) g^{\frac{1}{2}}(X-x).
\end{equation}
We now define the one dimensional windowed LDOS (wLDOS) at energy $E$ and position $x$ with window functions $f$ and $g$ as follows.
\begin{definition} \label{def:wLDOS}
Let $H$ be the Hamiltonian of a finite-dimensional tight-binding model with $d = 1$, and let $X$ be the position operator \eqref{eq:pos_R1}. Let $E$ and $x$ be real numbers, let $f(\xi)$ and $g(\xi)$ be positive $L^1$ functions such that $g$ is compactly supported with $0 \leq g(\xi) \leq 1$, and let $F_{E,x}(H,X)$ be as in \eqref{eq:mLDOS_1d}. The windowed local density of states (wLDOS) at energy $E$ and position $x$ with window functions $f$ and $g$ is then defined by
\begin{equation} \label{eq:mLDOS}
    W_x(E) := \Tr F_{E, x}(H, X).
  \end{equation}
\end{definition}
Note that the wLDOS can be defined at any $x \in \field{R}$, even if $x$ is not the position of a site.

An alternative equivalent formulation which will be useful is as follows. For arbitrary Hermitian matrices $A$, we have the identity
\begin{equation}
    \Tr ( A^2 ) = \| A \|^2_F
\end{equation}
where $\| A \|_F$ denotes the Frobenius norm of a matrix $A = (a_{ij})_{1 \leq i,j \leq \mathcal{N}}$
\begin{equation}
    \| A \|_F = \sqrt{ \sum_{i,j = 1}^{\mathcal{N}} |a_{ij}|^2 }.
\end{equation}
Since $F_{E,x}(H,X)$ is positive by construction, we have 
\begin{equation} \label{eq:trace_frob}
    \Tr F_{E,x}(H,X) = \left\| F^{\frac{1}{2}}_{E,x}(H,X) \right\|_F^2
\end{equation}
so that the right-hand side of \eqref{eq:trace_frob} gives an alternative definition of the wLDOS.

We now claim the following proposition, which establishes that the
wLDOS defined by Definition \ref{def:wLDOS}, with a particular
  choice of window functions, reduces to the standard LDOS for
finite-dimensional tight-binding models. 
\begin{proposition} \label{prop:wLDOS_reduces}
Let $f$ be a positive $L^1$ function. Let $\left\{ g_n \right\}_{n \in \mathcal{I}}$ and $\left\{ x_n \right\}_{n \in \mathcal{I}}$ denote sets of compactly supported functions each satisfying $0 \leq g_n \leq 1$, and real numbers respectively. For each $n \in \mathcal{I}$, let $W_{x_n}(E)$ denote the wLDOS at energy $E$ and position $x_n$ using window functions $f$ and $g_n$. Then:
\begin{enumerate}
\item If the functions $g_n$ centered at $x_n$ form a partition of
  unity 
\begin{equation} \label{eq:part_of_unity}
    \sum_{n \in \mathcal{I}} g_n(\xi - x_n) = 1,
\end{equation}
then
\begin{equation} \label{eq:add_to_wDOS}
    \frac{1}{\mathcal{N}} \sum_{n \in \mathcal{I}} W_{x_n}(E) = W(E)
\end{equation}
where $W(E)$ is the wDOS defined by \eqref{eq:wDOS}.
\item If each point $x_n$ is chosen as the co-ordinate of the $n$th site (so that $\mathcal{I} = \left\{1,...,\mathcal{N}\right\}$), the functions $g_n$ are chosen such that exactly one site is in the support of $g_n(\xi - x_n)$ for all $n \in \mathcal{I}$, and $g_n(0) = 1$, then \eqref{eq:add_to_wDOS} holds and for each $n \in \mathcal{I}$ 
\begin{equation}
    W_{x_n}(E) = \sum_{m = 1}^M \bra{ \delta_n^m } f(H-E) \ket{ \delta_n^m }
\end{equation}
which is exactly the windowed (in energy) LDOS \eqref{eq:wieLDOS}.
\end{enumerate}
\end{proposition}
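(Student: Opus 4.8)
The plan is to reduce both statements to a single trace computation exploiting the cyclic property of the trace together with the diagonal structure of the position operator $X$ in the site basis. The starting point is to observe that, since $g_n^{\frac{1}{2}}(X - x_n)$ and $f(H-E)$ are both Hermitian (indeed $f, g_n \geq 0$), cyclicity of the trace gives
\begin{equation}
W_{x_n}(E) = \Tr\left[ g_n^{\frac{1}{2}}(X - x_n)\, f(H-E)\, g_n^{\frac{1}{2}}(X - x_n) \right] = \Tr\left[ g_n(X - x_n)\, f(H-E) \right].
\end{equation}
Summing over $n \in \mathcal{I}$ and using linearity of the trace, I would then write $\sum_{n \in \mathcal{I}} W_{x_n}(E) = \Tr[ ( \sum_{n \in \mathcal{I}} g_n(X-x_n) ) f(H-E) ]$, so that everything hinges on identifying the operator $\sum_{n} g_n(X - x_n)$.

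Next I would use that $X$ is diagonal in the site basis, $X \ket{\delta_k^m} = x_k \ket{\delta_k^m}$, so that each $g_n(X - x_n)$ is diagonal with eigenvalue $g_n(x_k - x_n)$ on $\ket{\delta_k^m}$. For part (1), the partition-of-unity hypothesis \eqref{eq:part_of_unity} evaluated at each site position $\xi = x_k$ gives $\sum_{n} g_n(x_k - x_n) = 1$ for every $k$, whence $\sum_{n} g_n(X - x_n)$ is the identity on $\field{C}^{\mathcal{N}}$. Therefore $\sum_{n} W_{x_n}(E) = \Tr f(H-E) = \mathcal{N} W(E)$ by the trace form of the wDOS in \eqref{eq:wDOS}, which is exactly \eqref{eq:add_to_wDOS}.

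For part (2), I would note that the hypotheses---$x_n$ is the position of site $n$, the support of $g_n(\cdot - x_n)$ contains exactly one site, and $g_n(0) = 1$---force $g_n(x_k - x_n)$ to equal the Kronecker delta $\delta_{kn}$: the value at $k = n$ is $g_n(0) = 1 \neq 0$ so site $n$ lies in the support, and by the ``exactly one site'' condition no other site does. Consequently $g_n(X - x_n)$ equals the orthogonal projection $\sum_{m=1}^M \ket{\delta_n^m}\bra{\delta_n^m}$ onto the internal states at site $n$, and since $g_n(0)^{\frac{1}{2}} = 1$ the same is true of $g_n^{\frac{1}{2}}(X-x_n)$. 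A further application of cyclicity then yields $W_{x_n}(E) = \sum_{m=1}^M \bra{\delta_n^m} f(H-E) \ket{\delta_n^m}$, the windowed-in-energy LDOS. Summing these projections over $n$ recovers the identity, so \eqref{eq:add_to_wDOS} follows as in part (1).

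The computation is essentially routine; the only point requiring care is the observation that, because $X$ has pure point spectrum supported on the site positions $\left\{ x_k \right\}$, the partition-of-unity condition is only ever tested at those points. This is what allows the weaker hypotheses of part (2) (one site per window together with $g_n(0) = 1$) to suffice for \eqref{eq:add_to_wDOS} without a genuine partition of unity on all of $\field{R}$, and it also signals where a continuum generalization would instead demand global control of $\sum_{n} g_n(\cdot - x_n)$.
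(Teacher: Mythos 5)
Your proof is correct and follows essentially the same route as the paper's: cyclicity of the trace to reduce $W_{x_n}(E)$ to $\Tr\bigl(g_n(X-x_n)f(H-E)\bigr)$, then diagonality of $X$ in the site basis to evaluate $g_n$ at the site positions for both parts. Your closing remark that the partition-of-unity condition is only ever tested at the points of the spectrum of $X$ is a correct and slightly more explicit justification of why the weaker hypotheses of part (2) still yield \eqref{eq:add_to_wDOS}, but it does not change the argument.
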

As an example of a set of functions satisfying \eqref{eq:part_of_unity}, note that the bump function \eqref{eq:f} satisfies
\begin{equation} \label{eq:g_part_of_unity}
    \sum_{n \in \field{Z}} g(\xi - 2 n) = 1.
\end{equation}
\begin{proof}[Proof of Proposition \ref{prop:wLDOS_reduces}]
Using the cyclic property of the trace and the definition \eqref{eq:mLDOS_1d}, we have that
\begin{equation} \label{eq:trace}
    \Tr \left( F_{E,x}(H,X) \right) = \Tr \bigl( g(X - x) f(H - E) \bigr).
\end{equation}
Now let $\left\{ g_n(\xi) \right\}_{n \in \mathcal{I}}$ and $\left\{ x_n \right\}_{n \in \mathcal{I}}$ be as in Proposition \ref{prop:wLDOS_reduces}, and assume \eqref{eq:part_of_unity}. Then
\begin{equation}
    \frac{1}{\mathcal{N}} \sum_{n \in \mathcal{I}} W_{x_n}(E) = \frac{1}{\mathcal{N}} \Tr \left( \sum_{n \in \mathcal{I}} g(X-x_n) f(H-E) \right) = \frac{1}{\mathcal{N}} \Tr f(H - E),
\end{equation}
which is nothing but \eqref{eq:wDOS}, so (1) is proved. For (2), note that we can expand the trace in \eqref{eq:trace} in the basis of eigenvectors of $X$ to derive
\begin{equation} \label{eq:W_xE}
\begin{split}
    W_x(E) = \Tr \bigl( g(X - x) f(H - E) \bigr) = \sum_{n = 1}^N \sum_{m = 1}^M \ip{ g(X - x) \delta^m_n }{ f(H - E) \delta^m_n }.
\end{split}
\end{equation}
Using the spectral representation of $X$ we have that
\begin{equation} \label{eq:W_xE_2}
    W_x(E) = \sum_{n = 1}^N \sum_{m = 1}^M g(x_n - x) \bra{ \delta^m_n } f(H - E) \ket{ \delta^m_n }.
\end{equation}
Part (2) of Proposition \ref{prop:wLDOS_reduces} is now clear from the assumptions that exactly one site is in the support of each $g(x_n - x)$ and $g(0) = 1$.
\end{proof}


We now define the wLDOS in higher dimensions. In $d$ dimensions, recall that we define position operators by
\begin{equation} \label{eq:pos_Rd}
    X_l := \sum_{n = 1}^N x_{n,l} \sum_{m = 1}^M \ket{\delta^m_n} \bra{\delta^m_n} \quad 1 \leq l \leq d
\end{equation}
where $\vec{x}_n = (x_{n,1},x_{n,2},...,x_{n,d})$ are the co-ordinates of the $n$th site. We will use the obvious notation $\vec{X} := (X_1,...,X_d)$. Let $f(\xi)$ be a positive $L^1$ function and $g(\vec{\xi})$ denote a compactly supported function $\field{R}^d \rightarrow \field{R}$ such that $0 \leq g(\vec{\xi}) \leq 1$. Then for arbitrary real $E$ and $\vec{x} = (x_1,...,x_d) \in \field{R}^d$, let
\begin{equation} \label{eq:mLDOS_Rd}
    F_{E,\vec{x}}(H,\vec{X}) := g^{\frac{1}{2}}(\vec{X}-\vec{x}) f(H - E) g^{\frac{1}{2}}(\vec{X}-\vec{x}).
\end{equation}

\begin{definition} \label{def:wLDOS_Rd}
Let $H$ be the Hamiltonian of a finite-dimensional tight-binding model, and let $X_l, 1 \leq l \leq d$ denote the position operators \eqref{eq:pos_Rd}. Let $E$ be a real number and $\vec{x} = (x_1,...,x_d) \in \field{R}^d$, let $f(\xi)$ be a positive $L^1$ function and $g(\vec{\xi}) \in L^1(\field{R}^d)$ be compactly supported with $0 \leq g(\vec{\xi}) \leq 1$, and let $F_{E,\vec{x}}(H,X)$ be as in \eqref{eq:mLDOS_Rd}. The windowed local density of states (wLDOS) at energy $E$ and position $\vec{x}$ with window functions $f$ and $g$ is then defined by
\begin{equation} \label{eq:wLDOS_Rd}
  W_{\vec{x}}(E) := \Tr F_{E, \vec{x}}(H, \vec{X}).
\end{equation}
\end{definition}

Since the proof is identical to that of Proposition \ref{prop:wLDOS_reduces}, we state the following without proof.
\begin{proposition} \label{prop:wLDOS_reduces_Rd}
Let $\left\{ g_n(\vec{\xi}) \right\}_{n \in \mathcal{I}}$ and $\left\{ \vec{x}_n \right\}_{n \in \mathcal{I}}$ denote sets of compactly supported functions $\field{R}^d \rightarrow \field{R}$ each satisfying $0 \leq g_n \leq 1$, and real numbers respectively. Let $f(\xi)$ be a positive $L^1$ function. For each $n \in \mathcal{I}$, let $W_{\vec{x}_n}(E)$ denote the wLDOS at energy $E$ and position $x_n$ using window functions $f$ and $g_n$. Then:
\begin{enumerate}
\item If the functions $g_n$ centered at $\vec{x}_n$ form a partition of unity
\begin{equation} \label{eq:part_of_unity_Rd}
    \sum_{n \in \mathcal{I}} g_n(\vec{\xi} - \vec{x}_n) = 1,
\end{equation}
then
\begin{equation} \label{eq:add_to_wDOS_Rd}
    \frac{1}{\mathcal{N}} \sum_{n \in \mathcal{I}} W_{\vec{x}_n}(E) = W(E).
\end{equation}
\item If each point $\vec{x}_n$ is chosen as the co-ordinate of the $n$th site (so that $\mathcal{I} = \left\{1,...,\mathcal{N}\right\}$) and the functions $g_n$ are chosen such that exactly one site is in the support of $g_n(\vec{\xi} - \vec{x}_n)$ for all $n \in \mathcal{I}$, then \eqref{eq:add_to_wDOS} holds, and the wLDOS reduces to the windowed (in energy) LDOS:
\begin{equation}
    W_{\vec{x}_n}(E) = \sum_{m = 1}^M \bra{ \delta^m_n } f(H - E) \ket{\delta_n^m}.
  \end{equation}
\end{enumerate}
\end{proposition}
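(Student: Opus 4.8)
The plan is to follow the proof of Proposition \ref{prop:wLDOS_reduces} essentially verbatim, the only new ingredient being that the scalar functional calculus of the single operator $X$ is replaced by the joint functional calculus of the commuting family $X_1,\dots,X_d$ defined in \eqref{eq:pos_Rd}. First I would record that these operators pairwise commute, since each $X_l$ is diagonal in the site basis $\{\delta^m_n\}$ with $X_l \delta^m_n = x_{n,l}\,\delta^m_n$; consequently each $\delta^m_n$ is a simultaneous eigenvector of $\vec{X} = (X_1,\dots,X_d)$ with joint eigenvalue $\vec{x}_n$, and $g(\vec{X}-\vec{x})$ is unambiguously defined by $g(\vec{X}-\vec{x})\delta^m_n = g(\vec{x}_n - \vec{x})\,\delta^m_n$. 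With this in hand, the square-root splitting in \eqref{eq:mLDOS_Rd} together with the cyclic property of the trace yields the $d$-dimensional analogue of \eqref{eq:trace},
\begin{equation}
    \Tr F_{E,\vec{x}}(H,\vec{X}) = \Tr\bigl(g(\vec{X}-\vec{x})\,f(H-E)\bigr),
\end{equation}
which is the identity on which both parts rest.

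For part (1), I would sum this identity over $n \in \mathcal{I}$ and interchange the finite sum with the trace. The partition of unity \eqref{eq:part_of_unity_Rd} gives $\sum_{n \in \mathcal{I}} g_n(\vec{X}-\vec{x}_n) = I$ by the joint functional calculus, so the summand collapses to $\Tr f(H-E)$; dividing by $\mathcal{N}$ reproduces the wDOS \eqref{eq:wDOS} and hence establishes \eqref{eq:add_to_wDOS_Rd}.

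For part (2), I would expand the trace in the site basis, which is precisely the joint eigenbasis of $\vec{X}$, to obtain the $d$-dimensional version of \eqref{eq:W_xE_2},
\begin{equation}
    W_{\vec{x}}(E) = \sum_{n=1}^N \sum_{m=1}^M g(\vec{x}_n - \vec{x})\,\bra{\delta^m_n} f(H-E)\ket{\delta^m_n}.
\end{equation}
Specializing to $\vec{x} = \vec{x}_n$ and invoking the hypothesis that exactly one site lies in the support of each window (together with the normalization $g_n(\vec{0})=1$ carried over from the one-dimensional statement) annihilates every term except the $n$th, leaving exactly the windowed-in-energy LDOS \eqref{eq:wieLDOS}.

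I do not anticipate any genuine obstacle, as the whole argument is a transcription of the one-dimensional proof. The single point that requires comment — and the only place where the dimension enters — is the passage from a single position operator to the commuting tuple $\vec{X}$, that is, verifying that the multivariate functional calculus defining $g(\vec{X}-\vec{x})$ acts as expected on the common eigenbasis $\{\delta^m_n\}$. Because the $X_l$ are simultaneously diagonal this is immediate, so all the trace manipulations above carry over without change.
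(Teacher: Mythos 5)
Your proposal is correct and follows exactly the route the paper intends: the paper states Proposition \ref{prop:wLDOS_reduces_Rd} without proof precisely because the argument is a verbatim transcription of the proof of Proposition \ref{prop:wLDOS_reduces}, with the single operator $X$ replaced by the commuting, simultaneously diagonal tuple $\vec{X}$. Your observation that the normalization $g_n(\vec{0})=1$ must be carried over from the one-dimensional statement (it is omitted from the $d$-dimensional hypothesis) is a fair and accurate point.
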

As an example of a set of functions satisfying \eqref{eq:part_of_unity_Rd} we can take products and translates of the bump function \eqref{eq:f}. For example in dimension $d = 2$ we have
\begin{equation}
    \sum_{(n_1,n_2) \in \field{Z}^2} g(\xi_1 - 2 n_1) g(\xi_2 - 2 n_2) = 1.
\end{equation}

\begin{remark}
We briefly note some practical considerations which should be taken into account when numerically computing the wLDOS. First, note that a na\"ive computation of $f(H - E)$ would require a potentially expensive diagonalization of $H$. Assuming $f(\xi)$ is sufficiently smooth this can be avoided by approximating $f(\xi)$ in $L^\infty$ by a polynomial $f_p(\xi)$ so that $f_p(H-E)$ can be accurately computed by merely repeatedly applying $H-E$. This is known as the kernel polynomial method \cite{Weisse2006,lin2016approximating} and can be rigorously justified using Bernstein's theorem. Second, note that when $F_{E,\vec{x}}(H,\vec{X})$ has large rank it may be costly to directly evaluate the trace defining the wLDOS. In this case it may be preferable to compute the trace using a randomized algorithm \cite{doi:10.1080/03610918908812806,lin2016approximating}.
\end{remark}

\begin{remark}
We have seen that for finite systems the DOS can be recovered from the wLDOS by averaging over sites. Assuming the wLDOS can be computed efficiently, one can imagine a scheme for efficiently computing the DOS of a large system by averaging over local computations of the wLDOS. The efficiency of this approach would come from the fact that each local computation could be computed independently and hence could be parallelized.
\end{remark}







\section{Proof that the wLDOS is local} \label{sec:local_proof}

We now prove the wLDOS is local in the sense that it can be computed with a finite truncation of the system Hamiltonian nearby to the point of interest up to error which can be made arbitrarily small. We will prove this initially for finite-dimensional models without aiming for optimal constants. In the next section we will introduce natural assumptions which will significantly improve these constants and allow us to exploit locality to define wLDOS for a class of infinite-dimensional tight-binding models. We will state and prove the result in one spatial dimension for clarity and then state the general result for models in $\field{R}^d$ without proof since the proof is similar to the one dimensional case.

Let $f(\xi)$ and $g(\xi)$ be as in Section \ref{sec:wLDOS_sec}, i.e., $f$ is a positive $L^1$ function, and $g \in L^1$ compactly supported with $0 \leq g \leq 1$. We now additionally assume that $f$ is sufficiently smooth (specifically, we assume \eqref{eq:f_regularity}).
Now let $k(\xi) \in L^1$ be
compactly supported with $0 \leq k \leq 1$ equalling $1$ on the
support of $g(\xi)$. For example, if $g$ is given by \eqref{eq:f}, we
can take $k$ to be
\begin{equation}
    k(\xi) = \begin{cases} 0 & \xi \leq -4 \\ \frac{1}{2} (\xi + 4)^2 & -4 < \xi \leq -3 \\ 1 - \frac{1}{2} (\xi + 2)^2 & -3 < \xi \leq -2 \\ 1 & -2 < \xi \leq 2 \\ 1 - \frac{1}{2} (\xi - 2)^2 & 2 < \xi \leq 3 \\ \frac{1}{2} (\xi - 4)^2 & 3 < \xi \leq 4 \\ 0 & 4 \leq \xi. \end{cases}
\end{equation}
We think of $k$ as modeling spatial truncation of the Hamiltonian. 

We first give an outline of our results before stating theorems. Detailed proofs will be postponed to an Appendix. Recall the definition of the wLDOS (Definition \ref{def:wLDOS}). The statement that the wLDOS is local is then the statement that for any real $E$ and $x$,
\begin{equation} \label{eq:to_prove_1}
\begin{split}
    W_x(E) &= \Tr F_{E,x}(H,X) \\ 
    &\approx \Tr g^{\frac{1}{2}}(X - x) f\left( k(X - x) (H-E) k(X - x) \right) g^{\frac{1}{2}}(X - x).
\end{split}
\end{equation}
Note that the right-hand side only involves the ``spatially
  truncated'' Hamiltonian
  $k(X - x) (H - E) k(X - x)$. The main
step to prove \eqref{eq:to_prove_1} will be an estimate
\begin{equation} \label{eq:to_prove}
\begin{split}
    &\left\| g^{\frac{1}{2}}(X-x)f(H-E)g^{\frac{1}{2}}(X-x) \right\|     \\
    &\approx \left\| g^{\frac{1}{2}}(X-x)f\bigl(k(X-x)(H-E)k(X-x)\bigr)g^{\frac{1}{2}}(X-x) \right\| \\
\end{split}
\end{equation}
in the \emph{operator} norm. Since \eqref{eq:to_prove} is equivalent to a statement about Frobenius norms using \eqref{eq:trace_frob}, we can pass to the estimate \eqref{eq:to_prove_1} using equivalence of finite-dimensional norms (for large system sizes this step will give a large constant which can be avoided by making natural assumptions on $H$, see Section \ref{sec:infinite_systems}).

We now move to stating our results rigorously. We will establish \eqref{eq:to_prove} in two steps. The first and more difficult step is to prove the following lemma.
\begin{lemma} \label{lem:locality_opnorm}
Suppose $H$ and $X$ are finite-dimensional Hermitian operators. Let $g$ and $k$ be positive $L^1$ functions such that 
$0\leq g\leq1$, $0\leq k\leq1$, and $kg=g$. Let $f$ be a positive $L^1$ function such that 
\begin{equation} \label{eq:f_regularity}
    \inty{-\infty}{\infty}{ \left( 1 + |t|^2 \right) \left| \widehat{f}(t) \right| }{t} < \infty, \quad \widehat{f}(t) := \frac{1}{2 \pi} \inty{-\infty}{\infty}{ e^{- i t \xi} f(\xi) }{\xi}.
\end{equation}
Then
\begin{equation} \label{eq:locality_bound_toprove}
\left\Vert g^{\frac{1}{2}}(X)f(H)g^{\frac{1}{2}}(X)-g^{\frac{1}{2}}(X)f\bigl(k(X)Hk(X)\bigr)g^{\frac{1}{2}}(X)\right\Vert \leq C_1\left\Vert \left[k(X),H\right]\right\Vert
\end{equation}
where
\begin{equation} \label{eq:C}
    C_1 = \int_{-\infty}^{\infty} |t| ( 1 + |t| \| H \|) \left| \widehat{f}(t)\right|\,dt.
  \end{equation}
\end{lemma}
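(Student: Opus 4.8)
The plan is to pass to the Fourier representation of $f$, reduce the operator-norm bound to a pointwise-in-$t$ estimate for the unitary evolutions generated by $H$ and by the truncated Hamiltonian $\widetilde{H} := k(X)H k(X)$, and then extract the commutator $[k(X),H]$ by a variation-of-constants (Duhamel) argument combined with two algebraic identities. Write $K := k(X)$ and $G := g^{\frac{1}{2}}(X)$. With the convention in \eqref{eq:f_regularity} we have $f(\xi) = \int_{-\infty}^{\infty} e^{it\xi}\widehat{f}(t)\,dt$, so by the spectral theorem $f(H)-f(\widetilde{H}) = \int_{-\infty}^{\infty}\bigl(e^{itH}-e^{it\widetilde{H}}\bigr)\widehat{f}(t)\,dt$, and it suffices to establish the pointwise bound $\left\Vert G(e^{itH}-e^{it\widetilde{H}})G\right\Vert \leq |t|\,(1+|t|\,\|H\|)\,\|[K,H]\|$. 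Integrating this against $|\widehat{f}(t)|$ then reproduces $C_1$, which is finite precisely because of the regularity hypothesis (note $|t|\leq \tfrac12(1+|t|^2)$).

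First I would record the algebraic consequences of the hypothesis $kg=g$. Since $k\equiv 1$ on the support of $g$, we get $k\,g^{\frac12}=g^{\frac12}$ pointwise, hence $KG=GK=G$. This yields the key right-hand identity $(H-\widetilde{H})G = (I-K)HG = -[K,H]G$, so that $H-\widetilde{H}$, although not small in operator norm, becomes $O(\|[K,H]\|)$ when placed against $G$. A direct expansion also gives the commutator identity $[H,\widetilde{H}] = -KH[K,H]-[K,H]HK$, whence $\left\Vert [H,\widetilde{H}]\right\Vert \leq 2\,\|H\|\,\|[K,H]\|$ using $\|K\|\leq 1$.

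For the pointwise estimate I would set $R(t):=(e^{itH}-e^{it\widetilde{H}})G$, which solves $R'=iHR+iJ$ with $R(0)=0$ and $J(t):=(H-\widetilde{H})e^{it\widetilde{H}}G$. Variation of constants then gives $R(t)=i\int_0^t e^{i(t-s)H}J(s)\,ds$, so that $GR(t)$ has norm at most $\int_0^{|t|}\|J(s)\|\,ds$ because $\left\Vert Ge^{i(t-s)H}\right\Vert\leq 1$. To bound $J$, commute $H$ past the propagator: since $\widetilde{H}$ commutes with $e^{it\widetilde{H}}$, one finds $J(t) = -e^{it\widetilde{H}}[K,H]G + [H,e^{it\widetilde{H}}]G$, and Duhamel applied once more gives $\left\Vert [H,e^{it\widetilde{H}}]\right\Vert\leq |t|\,\left\Vert[H,\widetilde{H}]\right\Vert\leq 2|t|\,\|H\|\,\|[K,H]\|$. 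Hence $\|J(s)\|\leq (1+2|s|\,\|H\|)\,\|[K,H]\|$, and $\int_0^{|t|}(1+2s\|H\|)\,ds = |t|+t^2\|H\|$ reproduces the claimed constant exactly.

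The main obstacle is precisely that $H-\widetilde{H}$ is not small in operator norm and, in the raw Duhamel expansion of $e^{itH}-e^{it\widetilde{H}}$, sits sandwiched between two propagators rather than adjacent to the localizing factor $G$. The variation-of-constants formulation is what resolves this: it positions a norm-one propagator $e^{i(t-s)H}$ next to $G$ and isolates the factor $J(t)$, at which point the two identities convert both $H-\widetilde{H}$ and $[H,\widetilde{H}]$ into quantities controlled by $[K,H]$. The only delicate bookkeeping is that the factor of $2$ in $\left\Vert[H,\widetilde{H}]\right\Vert\leq 2\|H\|\,\|[K,H]\|$ must combine with the time integral $\int_0^t 2s\,ds=t^2$ to match the stated coefficient of $t^2\|H\|$ in $C_1$.
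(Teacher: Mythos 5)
Your proof is correct and follows essentially the same route as the paper's: both pass to the Fourier representation of $f$, reduce the claim to the pointwise bound $\| G(e^{itH}-e^{it\widetilde{H}})G \| \leq |t|(1+|t|\,\|H\|)\,\|[K,H]\|$ (the paper's Lemma \ref{lem:lucien}), and establish that bound by a Duhamel/variation-of-constants argument that exploits $KG=G$ to convert $H-\widetilde{H}$ into $-[K,H]G$. The only difference is cosmetic bookkeeping in the integrand: you bound $[H,e^{is\widetilde{H}}]$ via $\|[H,\widetilde{H}]\|\leq 2\|H\|\,\|[K,H]\|$, whereas the paper commutes $k(X)$ through $e^{is\widetilde{H}}$ and bounds $[k(X),e^{is\widetilde{H}}]$ via $[k(X),\widetilde{H}]$, and both yield the identical integrand $(1+2|s|\,\|H\|)\,\|[K,H]\|$ and hence the same constant $C_1$.
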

Condition \eqref{eq:f_regularity} holds as long as $f$ is twice differentiable with the Fourier transform of $f''$ in $L^1$.

The second step is to prove that the right-hand side of \eqref{eq:locality_bound_toprove} can be made arbitrarily small by an appropriate choice of $k(\xi)$. Specifically, we will prove the following proposition. 
\begin{proposition} \label{prop:k_alpha}
Let $g(\xi) \in L^1$ satisfy $0 \leq g \leq 1$ and have support confined to the interval $[-L,L]$ for some fixed $L > 0$. Then it is possible to construct compactly supported functions $k_\alpha(\xi)$ defined for each $\alpha > 0$ which equal $1$ for all $\xi \in [-L,L]$ and such that
\begin{equation} \label{eq:kXH}
    \| [k_\alpha(X),H] \| \leq C_2 \alpha \| [X,H] \|,
\end{equation}
where $C_2 > 0$ is a constant independent of $\alpha$. The support of $k_\alpha(\xi)$ is confined to the interval $\left[-\frac{L+4}{\alpha},\frac{L+4}{\alpha}\right]$.
\end{proposition}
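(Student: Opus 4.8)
The plan is to build $k_\alpha$ from a single fixed cutoff profile by stretching its transition region by a factor $1/\alpha$, so that its slope, and hence its commutator with $H$, scales linearly in $\alpha$. Fix a profile $\kappa$ of class $C^1$ (smoother if convenient) with $0\le\kappa\le1$, $\kappa\equiv1$ on $[-L,L]$, and $\supp\kappa\subseteq[-(L+4),L+4]$, for instance assembled from the same quadratic pieces as in \eqref{eq:f}. The cleanest choice is the pure dilation $k_\alpha(\xi):=\kappa(\alpha\xi)$, which is supported in $[-(L+4)/\alpha,(L+4)/\alpha]$ and equals $1$ on $[-L/\alpha,L/\alpha]\supseteq[-L,L]$ whenever $\alpha\le1$ (the only regime in which the stated support interval can contain $[-L,L]$ at all). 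To get $k_\alpha\equiv1$ on $[-L,L]$ for every admissible $\alpha$ one instead keeps the plateau fixed and stretches only the shoulders, setting $k_\alpha(\xi):=1$ for $|\xi|\le L$ and $k_\alpha(\xi):=q(\alpha(|\xi|-L))$ for $|\xi|>L$, where $q$ is a fixed $C^1$ decreasing profile with $q(0)=1$, $q'(0)=0$, and $q\equiv0$ on $[4,\infty)$; then $\supp k_\alpha\subseteq[-(L+4/\alpha),L+4/\alpha]\subseteq[-(L+4)/\alpha,(L+4)/\alpha]$. Either way $k_\alpha$ is compactly supported with $0\le k_\alpha\le1$ and $k_\alpha\equiv1$ on $\supp g$, so it is admissible in Lemma \ref{lem:locality_opnorm}.

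The core estimate is a dimension-free commutator bound $\Vert[k(X),H]\Vert\le\Vert\widehat{k'}\Vert_{L^1}\Vert[X,H]\Vert$, valid for any $k$ whose derivative has integrable Fourier transform. First I would establish $\Vert[e^{itX},H]\Vert\le|t|\,\Vert[X,H]\Vert$ from the Duhamel identity
\[
e^{itX}He^{-itX}-H=\int_0^t e^{isX}[iX,H]e^{-isX}\,ds,
\]
whose right-hand side has norm at most $|t|\,\Vert[X,H]\Vert$ because conjugation by the unitary $e^{isX}$ is an isometry. Writing $k(X)=\int\widehat{k}(t)e^{itX}\,dt$ by Fourier inversion (legitimate since $\widehat{k}\in L^1$, using the spectral representation of $X$) and commuting the integral past $H$ gives
\[
\Vert[k(X),H]\Vert\le\int_{-\infty}^{\infty}|t|\,|\widehat{k}(t)|\,dt\;\Vert[X,H]\Vert=\Vert\widehat{k'}\Vert_{L^1}\,\Vert[X,H]\Vert,
\]
using $\widehat{k'}(t)=it\widehat{k}(t)$.

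It then remains to show $\Vert\widehat{k_\alpha'}\Vert_{L^1}\le C_2\alpha$ with $C_2$ independent of $\alpha$. For the dilation $k_\alpha=\kappa(\alpha\,\cdot)$ this is an immediate change of variables: $\widehat{k_\alpha}(t)=\alpha^{-1}\widehat{\kappa}(t/\alpha)$ yields $\int|t|\,|\widehat{k_\alpha}(t)|\,dt=\alpha\int|s|\,|\widehat{\kappa}(s)|\,ds$, so $C_2=\Vert\widehat{\kappa'}\Vert_{L^1}$, which is finite because $\kappa$ is $C^1$ with Lipschitz derivative and hence $\widehat{\kappa}(s)=\bigO{|s|^{-3}}$. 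For the plateau-fixed construction one writes $k_\alpha'$ as a sum of two reflected translates of $\alpha\,q'(\alpha\,\cdot)$, each of Fourier $L^1$-norm $\alpha\Vert\widehat{q'}\Vert_{L^1}$, so $C_2=2\Vert\widehat{q'}\Vert_{L^1}$. Combining with the commutator bound gives \eqref{eq:kXH}.

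The main obstacle is not the commutator estimate, which is the standard Duhamel/Fourier argument above, but rather designing $k_\alpha$ so that three requirements hold at once: the exact value $1$ on the \emph{fixed} interval $[-L,L]$, support inside the \emph{shrinking} window $[-(L+4)/\alpha,(L+4)/\alpha]$, and a weighted Fourier norm scaling like $\alpha$ rather than $\bigO{1}$. A naive dilation achieves the clean $\alpha$-scaling but only fixes the plateau for $\alpha\le1$; fixing the plateau for all admissible $\alpha$ forces one to stretch only the shoulders, after which the one computation that genuinely must be carried out is re-checking that $\Vert\widehat{k_\alpha'}\Vert_{L^1}=\bigO{\alpha}$ — and not merely that the pointwise slope is $\bigO{\alpha}$, which by itself would not control the operator norm.
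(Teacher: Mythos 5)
Your proposal is correct and follows essentially the same route as the paper: the paper also takes a pure dilation $k_\alpha(\xi)=k_1(\alpha\xi)$ of a fixed profile equal to $1$ on $[-L,L]$ (built there as a finite sum of translates of the quadratic bump \eqref{eq:f}), restricts implicitly to $0<\alpha<1$, and proves the commutator bound via exactly your Duhamel estimate $\|[e^{itX},H]\|\leq|t|\,\|[X,H]\|$ followed by a Fourier representation showing $\|\widehat{k_\alpha'}\|_{L^1}=\alpha\|\widehat{k_1'}\|_{L^1}$. Your observation that the pointwise slope bound alone would not suffice, and that one must control the weighted Fourier norm, matches the role of the paper's Lemma \ref{lem:second}.
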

For the proofs of Lemma \ref{lem:locality_opnorm} and \ref{prop:k_alpha}, see Appendix \ref{app:locality_lemmas}.

Combining Lemma \ref{lem:locality_opnorm} with Proposition \ref{prop:k_alpha} we have the following theorem which makes \eqref{eq:to_prove_1} rigorous, establishing that $W_x(E)$ can be computed using the spatially truncated Hamiltonian $k_\alpha(X-x) H k_\alpha(X-x)$ up to error of order $\alpha$ for any $\alpha > 0$.
\begin{theorem} \label{th:first_locality_thm}
Let $H$ be a finite-dimensional Hermitian operator. Let $E$ and $x$ be real numbers. Let $W_x(E)$ be the wLDOS defined by Definition \ref{def:wLDOS} with window functions $f \in L^1$, positive and satisfying \eqref{eq:f_regularity}, and $g \in L^1$ compactly supported with $0 \leq g \leq 1$. Let $k_\alpha(X)$ for each $\alpha > 0$ be the functions constructed in Proposition \ref{prop:k_alpha}. Then
\begin{equation} \label{eq:main_bd}
    \left| W_x(E) - \Tr \left( g^{\frac{1}{2}}(X-x) f\left( k_\alpha(X - x) (H - E) k_\alpha(X - x) \right) g^{\frac{1}{2}}(X - x) \right) \right| \leq C \alpha,
\end{equation}
where $C > 0$ is a constant independent of $\alpha$.
\end{theorem}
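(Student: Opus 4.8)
The plan is to recognize that the quantity inside the absolute value in \eqref{eq:main_bd} is the trace of a single Hermitian operator, and then to control that trace using the operator-norm estimates already supplied by Lemma \ref{lem:locality_opnorm} and Proposition \ref{prop:k_alpha}. Setting
\[
A := g^{\frac{1}{2}}(X-x) f(H-E) g^{\frac{1}{2}}(X-x) - g^{\frac{1}{2}}(X-x) f\bigl(k_\alpha(X-x)(H-E)k_\alpha(X-x)\bigr) g^{\frac{1}{2}}(X-x),
\]
we have by Definition \ref{def:wLDOS} and linearity of the trace that the left-hand side of \eqref{eq:main_bd} is exactly $\abs{\Tr A}$. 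As a difference of two Hermitian operators, $A$ is Hermitian, so the task splits cleanly into two independent parts: bound $\|A\|$ in the operator norm, and then pass from the operator norm to the trace.

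For the first part I would apply Lemma \ref{lem:locality_opnorm} with $H$ replaced by the Hermitian operator $H-E$ and $X$ replaced by $X-x$, taking $k = k_\alpha$. The hypothesis $k_\alpha g = g$ holds because $k_\alpha \equiv 1$ on $[-L,L] \supseteq \supp g$ by Proposition \ref{prop:k_alpha}, and $f$ satisfies \eqref{eq:f_regularity} by assumption. This gives
\[
\|A\| \leq C_1' \bigl\| [k_\alpha(X-x), H-E] \bigr\|, \qquad C_1' := \inty{-\infty}{\infty}{ |t|\bigl(1+|t|\|H-E\|\bigr)\left|\widehat{f}(t)\right| }{t},
\]
with $C_1'$ finite and independent of $\alpha$. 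Since $E$ and $x$ are scalars, $[k_\alpha(X-x),H-E] = [k_\alpha(X-x),H]$ and $[X-x,H]=[X,H]$, so Proposition \ref{prop:k_alpha} yields $\|[k_\alpha(X-x),H]\| \leq C_2\alpha\|[X,H]\|$. Combining the two estimates gives $\|A\| \leq C_1' C_2 \|[X,H]\|\,\alpha$, which is $O(\alpha)$ with a constant independent of $\alpha$.

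For the second part I would use the crude but elementary bound $\abs{\Tr A} \leq \mathcal{N}\|A\|$, valid for any $\mathcal{N}\times\mathcal{N}$ matrix (equivalently, combine \eqref{eq:trace_frob} with the norm equivalences $\|\cdot\|_F \leq \sqrt{\mathcal{N}}\,\|\cdot\|$ and $\abs{\Tr A} \leq \sqrt{\mathcal{N}}\,\|A\|_F$). This produces
\[
\left| W_x(E) - \Tr\bigl(g^{\frac{1}{2}}(X-x)f(k_\alpha(X-x)(H-E)k_\alpha(X-x))g^{\frac{1}{2}}(X-x)\bigr) \right| \leq \mathcal{N} C_1' C_2 \|[X,H]\|\,\alpha,
\]
so the theorem holds with $C = \mathcal{N} C_1' C_2 \|[X,H]\|$.

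The genuinely hard analytic content lives in Lemma \ref{lem:locality_opnorm} and Proposition \ref{prop:k_alpha}; the assembly above is routine once those are in hand. The one delicate point in this final step is the conversion from operator norm to trace: the factor $\mathcal{N}$ makes $C$ grow with the system size, which is harmless for a fixed finite system but useless in the infinite-volume limit. I expect this to be the main obstacle in the sense that it is precisely the lossy step that must be removed later — under the decay assumptions of Section \ref{sec:infinite_systems} one replaces $\mathcal{N}$ by the size-independent number of sites meeting $\supp g(\cdot-x)$, exploiting that the sandwiching factors $g^{\frac{1}{2}}(X-x)$ force $A$ to be finite-rank uniformly in the system size.
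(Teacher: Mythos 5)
Your argument is correct and follows the paper's proof exactly: apply Lemma \ref{lem:locality_opnorm} (with $H-E$, $X-x$, $k=k_\alpha$) together with Proposition \ref{prop:k_alpha} to get the $O(\alpha)$ operator-norm bound, then pass to the trace via equivalence of finite-dimensional norms, accepting a constant that grows with $\mathcal{N}$. Your closing observation that this lossy step is precisely what Section \ref{sec:infinite_systems} removes matches the remark the authors make immediately after their proof.
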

\begin{proof}
Combining Lemma \ref{lem:locality_opnorm} with Proposition \ref{prop:k_alpha} gives a bound in the operator norm depending only on $\| H \|$ and $\| [H,X] \|$, both of which are finite since we work in a finite-dimensional space. To pass to the estimate in the Frobenius norm we use equivalence of finite-dimensional norms in the space $\field{C}^{\mathcal{N}}$.
\end{proof}
Note that the proof uses very na\"ive estimates that the resulting
constant $C$ will grow with system size. In the next section we will
introduce assumptions that allow for estimates which are uniform in
the system size.

For completeness we state the $d$-dimensional result without proof.
\begin{theorem} \label{th:first_locality_thm_ddim} Let $H$ be a
  finite-dimensional Hermitian operator. Let $E$ be real and
  $\vec{x} \in \field{R}^d$. Let $W_{\vec{x}}(E)$ be the wLDOS defined
  by Definition \ref{def:wLDOS_Rd} with window functions $f \in L^1$,
  positive and with Fourier transform satisfying
  \eqref{eq:f_regularity}, and $g \in L^1(\field{R}^d)$ compactly
  supported with $0 \leq g \leq 1$. Let $k_\alpha(\vec{X})$ for each
  $\alpha > 0$ be a tensor product of the the one-dimensional
  functions constructed in Proposition \ref{prop:k_alpha}. Then
  \begin{equation} \label{eq:main_bd}
    \left| W_{\vec{x}}(E) - \Tr \left( g^{\frac{1}{2}}(\vec{X}-\vec{x}) f\left( k_\alpha(\vec{X} - \vec{x}) (H - E) k_\alpha(\vec{X} - \vec{x}) \right) g^{\frac{1}{2}}(\vec{X} - \vec{x}) \right) \right| \leq C \alpha,
  \end{equation}
  where $C > 0$ is a constant independent of $\alpha$.
\end{theorem}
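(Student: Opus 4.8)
The plan is to follow the one-dimensional argument of Theorem \ref{th:first_locality_thm} almost verbatim, the only genuinely new ingredient being control of the commutator $\| [k_\alpha(\vec{X} - \vec{x}), H] \|$ for the tensor-product cutoff $k_\alpha(\vec{X} - \vec{x}) = \prod_{l=1}^d k_\alpha(X_l - x_l)$. First I would record that the operator-norm estimate of Lemma \ref{lem:locality_opnorm} holds with the single position operator $X$ replaced by the commuting family $\vec{X} = (X_1,\dots,X_d)$ and the scalar cutoffs replaced by the bounded operators $K := k_\alpha(\vec{X} - \vec{x})$ and $G := g(\vec{X} - \vec{x})$. The point is that the proof of Lemma \ref{lem:locality_opnorm} uses only three operator-level facts: that $0 \le K \le 1$ and $0 \le G \le 1$; that $KG = G$; and the Fourier representation $f(H) = \int \widehat{f}(t)\, e^{itH}\,\dee t$ together with a Duhamel expansion of the difference $e^{itH} - e^{it K H K}$. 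All three survive. The identity $KG = G$ holds because each factor $k_\alpha(\xi_l)$ equals $1$ on an interval containing the projection of $\supp g$ onto the $l$th coordinate, so $K$ equals $1$ on $\supp g(\cdot - \vec{x})$. Applying the lemma to the Hermitian operator $H - E$ (and noting $[K, H - E] = [K, H]$) then yields
\begin{equation}
\left\| G^{\frac12} f(H - E) G^{\frac12} - G^{\frac12} f\bigl( K (H - E) K \bigr) G^{\frac12} \right\| \le C_1 \left\| [K, H] \right\|,
\end{equation}
with $C_1$ as in \eqref{eq:C} but with $\|H - E\|$ in place of $\|H\|$, still finite.

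Next I would bound $\|[K, H]\|$ by telescoping the commutator across the $d$ tensor factors. Writing $K_l := k_\alpha(X_l - x_l)$ and applying the Leibniz rule $[AB, H] = A[B,H] + [A,H]B$ repeatedly,
\begin{equation}
[K, H] = \sum_{l=1}^d K_1 \cdots K_{l-1}\, [K_l, H]\, K_{l+1} \cdots K_d,
\end{equation}
so that, since each $\|K_l\| \le 1$,
\begin{equation}
\| [K, H] \| \le \sum_{l=1}^d \| [K_l, H] \| \le C_2\, \alpha \sum_{l=1}^d \| [X_l, H] \|,
\end{equation}
where the last inequality is Proposition \ref{prop:k_alpha} applied coordinatewise to each $X_l$. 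In a finite-dimensional space each $\|[X_l, H]\|$ is finite, so the right-hand side is of order $\alpha$.

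Finally I would convert the operator-norm bound into the claimed bound on the trace exactly as in Theorem \ref{th:first_locality_thm}: the quantity $W_{\vec{x}}(E) - \Tr(\cdots)$ is the trace of the operator bounded above, and for an $\mathcal{N} \times \mathcal{N}$ matrix $C$ one has $|\Tr C| \le \mathcal{N} \|C\|$ by equivalence of finite-dimensional norms (equivalently, passing through the Frobenius norm via \eqref{eq:trace_frob}). Combining the three displays gives \eqref{eq:main_bd} with a constant of the form $C = \mathcal{N}\, C_1 C_2 \sum_{l=1}^d \|[X_l, H]\|$.

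The only step requiring any thought is the first: verifying that the Fourier/Duhamel proof of Lemma \ref{lem:locality_opnorm} nowhere uses that the cutoffs are functions of a \emph{single} position operator. Since that argument is carried out entirely at the level of the bounded operators $K$, $G$ and their commutators with $H$—and the commuting family $X_1, \dots, X_d$ enters only through the composite operators $K$ and $G$ and the hypotheses $0 \le K,G \le 1$, $KG = G$—it transfers without change. The telescoping estimate for $\|[K, H]\|$ is then routine, and everything else is identical to the one-dimensional case.
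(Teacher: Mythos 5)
Your proposal is correct and follows exactly the route the paper intends: the paper states Theorem \ref{th:first_locality_thm_ddim} without proof, deferring to the one-dimensional argument, and you correctly identify that Lemma \ref{lem:locality_opnorm} and its Duhamel-based proof use only the operator-level hypotheses $0 \le K, G \le 1$ and $KG = G$, so they apply verbatim to the tensor-product cutoff. The one genuinely new ingredient---the Leibniz telescoping $\| [\prod_l K_l, H] \| \le \sum_l \|[K_l,H]\|$ followed by a coordinatewise application of Proposition \ref{prop:k_alpha}---is handled correctly, and the passage from the operator-norm bound to the trace bound via $|\Tr C| \le \mathcal{N}\|C\|$ matches the paper's use of equivalence of finite-dimensional norms in Theorem \ref{th:first_locality_thm}.
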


\section{Defining and computing the wLDOS of infinite systems using locality} \label{sec:infinite_systems}

In this section we will introduce natural assumptions which will allow for locality estimates which are uniform in system size. Using these locality estimates we will then show that the wLDOS is well-defined and computable for a broad class of infinite-dimensional tight-binding systems. 

\subsection{Approximation of infinite-dimensional tight-binding models by finite-dimensional tight-binding models}

Consider a tight-binding model on an infinite number of sites. As examples, we can consider an electron hopping on an infinite periodic lattice, or on an infinite quasicrystal lattice, or on random perturbations of such lattices. We will take the model Hilbert space to be $\mathcal{H} := \ell^2(V) \otimes \field{C}^M$, where $V$ denotes the set of sites and $M$ denotes the number of internal degrees of freedom, and denote the model Hamiltonian, a self-adjoint operator $\mathcal{H} \rightarrow \mathcal{H}$, by $H_{\infty}$. We define position operators by
\begin{equation} \label{eq:pos_Rd_inf}
    X_l := \sum_{n \in V} x_{n,l} \sum_{m = 1}^M \ket{\delta^m_n} \bra{\delta^m_n} \quad 1 \leq l \leq d
\end{equation}
where $\vec{x}_n = (x_{n,1},x_{n,2},...,x_{n,d})$ denotes the co-ordinates of each site.

Suppose we fix a real number $E$ and $\vec{x} \in \field{R}^d$. For any $R > 0$, we can define a finite-dimensional tight-binding model with Hamiltonian $H_R$ by restricting the infinite model to sites in the ball of radius $R$ about $\vec{x}$, and compute the wLDOS of this model at $E$ and $\vec{x}$. Since we recover the infinite-dimensional model in the limit $R \rightarrow \infty$, it is natural to ask whether it makes sense to take the limit of the wLDOS of the sequence of finite models and define the wLDOS of the infinite model by this limit. In the previous section we proved that the wLDOS can be computed from a truncation of the Hamiltonian to a region nearby the point of interest. Formally then, the sequence of wLDOS values should converge as $R \rightarrow \infty$. To make this rigorous, we have to control the constant in the estimate \eqref{eq:main_bd} as a function of $R$. The proof of Theorem \ref{th:first_locality_thm} clearly does not provide this since, for example, we invoke equivalence of norms in $\field{C}^{\mathcal{N}}$, where $\mathcal{N}$ will increase as function of $R$.
To pass to the limit $R \rightarrow \infty$, we require three natural assumptions which we expect will be satisfied by any physically reasonable tight-binding model. We first assume that the Hamiltonian $H_\infty$ is \emph{local} in the following sense. 
\begin{assumption} \label{as:locality}
Let $X_l$, $1 \leq l \leq d$ denote the position operators defined by \eqref{eq:pos_Rd_inf}, extended to all of $\field{R}^d$. We assume that $H_{\infty}$ is local in the sense that there exists a constant $C_{loc} > 0$ such that
\begin{equation}
    \sup_{1 \leq l \leq d} \bigl\lVert [ X_l , H_{\infty} ] \bigr\rVert \leq C_{loc}.
\end{equation}
\end{assumption}
Assumption \ref{as:locality} can be roughly stated as ``$H_\infty$ is a narrowly banded matrix in the position basis''. We next assume that $H_\infty$ is \emph{bounded}.
\begin{assumption} \label{as:bounded}
We assume that $H_\infty$ is bounded in the sense that there exists a constant $C_{norm} > 0$ such that
\begin{equation}
    \| H_\infty \| \leq C_{norm}.
\end{equation}
\end{assumption}
The final assumption rules out some pathological situations where e.g. balls with finite radius can contain an unbounded number of sites.
\begin{assumption} \label{as:finite_rank}
Let $g(\xi) \in L^1(\field{R}^d)$ be compactly supported with $0 \leq g \leq 1$. Then we assume that the rank of the matrix
\begin{equation}
    g^{\frac{1}{2}}(\vec{X} - \vec{x}) H_\infty g^{\frac{1}{2}}(\vec{X} - \vec{x})
\end{equation}
is uniformly bounded above for all $\vec{x} \in \field{R}^d$ by a positive integer $M_{upper}$.
\end{assumption}

Note that Assumptions \ref{as:locality}, \ref{as:bounded}, and \ref{as:finite_rank} are trivial for Hamiltonians $H$ of fixed finite-dimensional tight-binding models. 

We now have the following.
\begin{theorem} \label{th:second_locality_thm}
Let $H_\infty$ be an infinite-dimensional tight-binding Hamiltonian satisfying Assumptions \ref{as:locality}, \ref{as:bounded}, and \ref{as:finite_rank}. Let $E$ be real, and $\vec{x} \in \field{R}^d$. For any $R > 0$, define $H_R$ as the tight-binding Hamiltonian obtained by truncating $H_\infty$ to the set of sites within a ball of radius $R$ about the point $E$ and $\vec{x}$. Let $f \in L^1$ be positive and $g \in L^1$ be compactly supported with $0 \leq g \leq 1$, and let $W_{\vec{x},R}(E)$ be the wLDOS defined by Definition \ref{def:wLDOS} for the Hamiltonian $H_R$. Then:
\begin{enumerate}
\item The limit $W_{\vec{x},\infty}(E) := \lim_{R \rightarrow \infty} W_{\vec{x},R}(E)$ exists and equals
\begin{equation} \label{eq:limit}
    W_{\vec{x},\infty}(E) = \Tr  \left( g^{\frac{1}{2}}(\vec{X}-\vec{x}) f(H_\infty - E) g^{\frac{1}{2}}(\vec{X} - \vec{x}) \right).
\end{equation}
\item The limit $W_{\vec{x},\infty}(E)$ can be computed by the formula 
\begin{equation} \label{eq:formula}
    W_{\vec{x},\infty}(E) = \Tr \left( g^{\frac{1}{2}}(\vec{X} - \vec{x}) f\left( k_{\alpha}(\vec{X}-\vec{x}) (H_R-E) k_\alpha(\vec{X}-\vec{x}) \right) g^{\frac{1}{2}}(\vec{X}-\vec{x}) \right) + O(\alpha),
\end{equation}
as long as $R > 0$ is sufficiently large that 
\begin{equation}
    k_\alpha ( \vec{X} - \vec{x} ) (H_R - E) k_\alpha(\vec{X} - \vec{x}) = k_\alpha ( \vec{X} - \vec{x} ) (H_{\infty} - E) k_\alpha ( \vec{X} - \vec{x} ).
\end{equation}
\end{enumerate}
\end{theorem}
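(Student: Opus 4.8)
The plan is to run the finite-dimensional locality estimate of Theorem~\ref{th:first_locality_thm} on each truncation $H_R$, but to replace the two places where that proof degrades with system size---the constant $C_1$ from Lemma~\ref{lem:locality_opnorm}, and the passage from operator norm to Frobenius/trace norm via equivalence of norms on $\field{C}^{\mathcal{N}}$---by bounds uniform in $R$, using Assumptions~\ref{as:locality}, \ref{as:bounded}, and \ref{as:finite_rank} respectively. Throughout, abbreviate
\begin{equation}
    A_R := g^{\frac{1}{2}}(\vec X - \vec x) f(H_R - E) g^{\frac{1}{2}}(\vec X - \vec x), \qquad B_{R,\alpha} := g^{\frac{1}{2}}(\vec X - \vec x) f\bigl(k_\alpha(\vec X-\vec x)(H_R-E)k_\alpha(\vec X - \vec x)\bigr) g^{\frac{1}{2}}(\vec X - \vec x),
\end{equation}
and write $A_\infty$, $B_{\infty,\alpha}$ for the corresponding operators built from $H_\infty$ in place of $H_R$, so that $W_{\vec x, R}(E) = \Tr A_R$ and the claimed limit is $\Tr A_\infty$.

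First I would check that the truncations inherit Assumptions~\ref{as:locality} and \ref{as:bounded} uniformly. Writing $H_R = P_R H_\infty P_R$ with $P_R$ the orthogonal projection onto sites in the ball of radius $R$, both $P_R$ and each $X_l$ are diagonal in the position basis, so $[X_l, P_R] = 0$ and hence $[X_l, H_R] = P_R [X_l, H_\infty] P_R$; thus $\|H_R\| \le C_{norm}$ and $\sup_l \|[X_l, H_R]\| \le C_{loc}$ for every $R$. Feeding these into Lemma~\ref{lem:locality_opnorm} and Proposition~\ref{prop:k_alpha} (applied with $H_R - E$ and $\vec X - \vec x$, which have the same commutators as $H_R$ and $\vec X$) produces an operator-norm bound $\|A_R - B_{R,\alpha}\| \le \bar C \alpha$ in which $\bar C$ depends only on $f$, $g$, $d$, $E$, $C_{norm}$, and $C_{loc}$, and in particular not on $R$. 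The same computation applied to the bounded self-adjoint operator $H_\infty$---the proof of Lemma~\ref{lem:locality_opnorm} uses $H$ only through $\|H\|$, so it extends verbatim---gives $\|A_\infty - B_{\infty,\alpha}\| \le \bar C \alpha$ with the same constant.

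Next I would convert these operator-norm bounds into trace bounds without invoking equivalence of norms. Every operator above factors through the range of $g^{\frac{1}{2}}(\vec X - \vec x)$, which by Assumption~\ref{as:finite_rank} has rank at most some $r = O(M_{upper})$ uniformly in $\vec x$; using $|\Tr C| \le \operatorname{rank}(C)\,\|C\|$ we obtain $|\Tr A_R - \Tr B_{R,\alpha}| \le r\bar C\alpha$ and $|\Tr A_\infty - \Tr B_{\infty,\alpha}| \le r\bar C\alpha$, both uniform in $R$. The decisive locality input is that for fixed $\alpha$ the truncated operators coincide once $R$ is large enough that $\supp k_\alpha(\cdot - \vec x)$ lies in the retained ball: there $k_\alpha P_R = k_\alpha$, so $k_\alpha(\vec X - \vec x)(H_R-E)k_\alpha(\vec X - \vec x) = k_\alpha(\vec X - \vec x)(H_\infty - E)k_\alpha(\vec X - \vec x)$ and hence $\Tr B_{R,\alpha} = \Tr B_{\infty,\alpha}$. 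Combining the three estimates by the triangle inequality gives $|\Tr A_R - \Tr A_\infty| \le 2 r \bar C\alpha$ for all sufficiently large $R$; letting $\alpha \to 0$ shows $\Tr A_R$ is Cauchy and converges to $\Tr A_\infty$, which is part~(1). Part~(2) is then immediate: for fixed $\alpha$ and $R$ large enough to satisfy the stated coincidence, $W_{\vec x,\infty}(E) = \Tr A_\infty = \Tr B_{\infty,\alpha} + O(\alpha) = \Tr B_{R,\alpha} + O(\alpha)$, which is exactly \eqref{eq:formula}.

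The main obstacle is the uniform-in-$R$ control, i.e. verifying that none of the constants secretly depend on the truncation size: the commutator and norm bounds are handled cleanly by the first step, but the replacement of equivalence-of-norms by the finite-rank reduction is what genuinely requires Assumption~\ref{as:finite_rank}, and one must check that $r$ can be taken independent of both $\vec x$ and $R$. A secondary technical point is justifying that Lemma~\ref{lem:locality_opnorm}, stated for finite-dimensional operators, applies to $H_\infty$; this is routine since its proof is an integral estimate depending on $H$ only through $\|H\|$, but it should be recorded. If one prefers to avoid extending the lemma to $H_\infty$, the identification of the limit with $\Tr A_\infty$ can instead be obtained directly from $H_R \to H_\infty$ strongly together with the finite rank of $g^{\frac{1}{2}}(\vec X - \vec x)$, since strong convergence is norm convergence on a fixed finite-dimensional range.
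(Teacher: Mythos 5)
Your proposal is correct and follows essentially the same route as the paper: apply the finite-dimensional estimate (Lemma \ref{lem:locality_opnorm} plus Proposition \ref{prop:k_alpha}) to each $H_R$ with constants made uniform in $R$ via Assumptions \ref{as:locality}--\ref{as:finite_rank}, then use the coincidence of the $k_\alpha$-truncated operators for large $R$ to conclude convergence and identify the limit. Your write-up is in fact a bit more careful than the paper's at two points---explicitly checking that $H_R = P_R H_\infty P_R$ inherits the commutator and norm bounds, and replacing ``equivalence of finite-dimensional norms'' by the sharper $|\Tr C| \le \operatorname{rank}(C)\,\|C\|$---but these are refinements of the same argument, not a different one.
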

\begin{proof}
Let $R > 0$ be arbitrary. Then applying Lemma \ref{lem:locality_opnorm} and Proposition \ref{prop:k_alpha} to the truncated model with radius $R$ we have a bound of the form \eqref{eq:main_bd} in the operator norm with a constant depending on $\| [X_l,H_R] \|, 1 \leq l \leq d$ and $\| H_R \|$. Under Assumptions \ref{as:locality} and \ref{as:bounded}, these can both be bounded independent of $R$. To pass to the bound \eqref{eq:main_bd} in the Frobenius norm, we invoke equivalence of finite-dimensional norms in the space $\field{C}^{M_{upper}}$ using Assumption \ref{as:finite_rank}. We now have an estimate
\begin{equation} \label{eq:first}
    \left| W_{x,R}(E) - \Tr \left( g^{\frac{1}{2}}(\vec{X} - \vec{x}) f\left( k^{\frac{1}{2}}_{\alpha}(\vec{X}-\vec{x}) (H_R-E) k_\alpha(\vec{X}-\vec{x}) \right) g^{\frac{1}{2}}(\vec{X}-\vec{x}) \right) \right| \leq C \alpha
\end{equation}
where $C > 0$ is independent of both $R$ and $\alpha$. Since $k_{\alpha}(\vec{X}-\vec{x})$ is a cutoff, by taking $R$ sufficiently large we can ensure that
\begin{equation} \label{eq:cutoff_H}
    k_{\alpha}(\vec{X}-\vec{x}) (H_R-E) k_\alpha(\vec{X}-\vec{x}) = k_{\alpha}(\vec{X}-\vec{x}) (H_\infty-E) k_\alpha(\vec{X}-\vec{x}).
\end{equation}
We now claim that the sequence $\{ W_{\vec{x},R}(E) \}$ is Cauchy. Let $\epsilon > 0$ be arbitrary. For arbitrary $R, R'$, \eqref{eq:first} implies that 
\begin{equation} \label{eq:here}
\begin{split}
    &\left| W_{\vec{x},R}(E) - W_{\vec{x},R'}(E) \right| \\
    &= \left| \Tr \left( g^{\frac{1}{2}}(\vec{X} - \vec{x}) f\left( k_{\alpha}(\vec{X}-\vec{x}) (H_R-E) k_\alpha(\vec{X}-\vec{x}) \right) g^{\frac{1}{2}}(\vec{X}-\vec{x}) \right) \right. \\
    &\left. \quad - \Tr \left( g^{\frac{1}{2}}(\vec{X} - \vec{x}) f\left( k_{\alpha}(\vec{X}-\vec{x}) (H_{R'}-E) k_\alpha(\vec{X}-\vec{x}) \right) g^{\frac{1}{2}}(\vec{X}-\vec{x}) \right)^{\frac{1}{2}} \right| + O(\alpha).
\end{split}
\end{equation}
Now take $\alpha$ small enough such that the $O(\alpha)$ term is $< \epsilon$, and then take $R$ and $R'$ sufficiently large that \eqref{eq:cutoff_H} holds for both terms so that the other term in \eqref{eq:here} vanishes.

To see \eqref{eq:limit}, note that for arbitrary $R$ and $\alpha$,
\begin{equation}
    W_{\vec{x},R}(E) = \Tr \left( g^{\frac{1}{2}}(\vec{X}-\vec{x}) f\left( k_{\alpha}(\vec{X}-\vec{x}) (H_R - E) k_{\alpha}(\vec{X}-\vec{x}) \right) g^{\frac{1}{2}}(\vec{X} - \vec{x}) \right) + O(\alpha).
\end{equation}
Taking the limit $R \rightarrow \infty$ on both sides we have
\begin{equation} \label{eq:intermediate}
    W_{\vec{x},\infty}(E) = \Tr \left( g^{\frac{1}{2}}(\vec{X}-\vec{x}) f\left( k_{\alpha}(\vec{X}-\vec{x}) (H_\infty - E) k_{\alpha}(\vec{X}-\vec{x}) \right) g^{\frac{1}{2}}(\vec{X} - \vec{x}) \right) + O(\alpha).
\end{equation}
Taking the limit $\alpha \rightarrow 0$ (note that $\lim_{\alpha \rightarrow 0} k_{\alpha}(\vec{X}-\vec{x}) = 1$) now implies \eqref{eq:limit}. To see \eqref{eq:formula}, fix $\alpha > 0$ in \eqref{eq:intermediate}. Using \eqref{eq:cutoff_H} we have that for sufficiently large $R$ (depending on $\alpha$) that the right-hand side equals \eqref{eq:formula}.
\end{proof}
We can now make the following definition.
\begin{definition}
Let $H_\infty$ be the Hamiltonian of a tight-binding model on an infinite lattice satisfying Assumptions \ref{as:locality}, \ref{as:bounded}, and \ref{as:finite_rank} and let $X_l, 1 \leq l \leq d$ denote the position operators \eqref{eq:pos} extended to $\field{R}^d$. Let $E$ be real and $\vec{x} \in \field{R}^d$, let $f(\xi)$ be a positive $L^1$ function such that \eqref{eq:f_regularity} holds, and $g(\vec{\xi}) \in L^1(\field{R}^d)$ be compactly supported with $0 \leq g(\vec{\xi}) \leq 1$. We define the windowed local density of states (wLDOS) at energy $E$ and position $\vec{x}$ with window functions $f$ and $g$ by
\begin{equation}
\begin{split}
    W_{\vec{x}}(E) &:= \Tr F_{E,\vec{x}}(H_\infty,\vec{X})   \\
    F_{E,\vec{x}}(H_\infty,\vec{X}) &:= g^{\frac{1}{2}}(\vec{X} - \vec{x}) f(H_\infty-E) g^{\frac{1}{2}}(\vec{X} - \vec{x}).
\end{split}
\end{equation}
\end{definition}
That this quantity is well-defined and computable is guaranteed by Theorem \ref{th:second_locality_thm}.

\section{The Fibonacci SSH model} \label{sec:Fib_SSH}

In this section we introduce a one-dimensional model system which we refer to as the Fibonacci SSH model. We choose to study this model because it lacks any translational symmetry, making the wLDOS an important tool for understanding the electronic states of the system.

Before we can define the Fibonacci SSH model, we must first review the Fibonacci quasicrystal construction.

\subsection{The Fibonacci Quasicrystal} \label{sec:Fib_quasicrystal}

The Fibonacci quasicrystal is a one dimensional chain made of two sorts of ``links'' which have lengths $S$ and $L$ with $S < L$. We take the lengths in a fixed ratio of the golden mean 
\begin{equation}
\frac{|L|}{|S|}=\phi=\frac{1+\sqrt{5}}{2}.
\end{equation}
We can form an infinite quasiperiodic chain by starting with a series of links and then repeatedly applying the replacement rules 
\begin{equation}
\begin{array}{l}
    S \mapsto L,\\
    L \mapsto LS.
\end{array}
\end{equation}
If we start with the single letter $S$, we obtain the sequence
\begin{equation}
\begin{array}{l}
  S\\
  L\\
  LS\\
  LSL\\
  LSLLS\\
  LSLLSLSL\\
  LSLLSLSLLSLLS
\end{array}
\end{equation}
and so on. It will be more convenient to work with a sequence which grows in two directions rather than one. To obtain such a sequence, instead of starting with $S$ we start with $LL$. Applying the replacement rules we find
\begin{equation} \label{eq:non-per}
\begin{array}{c}
L\!\!\centerdot\!\!L\\
LS\!\!\centerdot\!\!LS\\
LSL\!\!\centerdot\!\!LSL\\
LSLLS\!\!\centerdot\!\!LSLLS\\
LSLLSLSL\!\!\centerdot\!\!LSLLSLSL\\
\end{array}
\end{equation}
and so on, where $\centerdot$ indicates the center of the
sequence. 
We will shortly want to identify the ends of each stage of the
sequence. When we do this, we would like to guarantee that we do not
create sequences of letters, known as words, which did not appear in
the original sequence. We call such words invalid words.

To obtain a sequence such that identifying the ends of each stage does not create invalid words, we start with $LLS$ instead of $LL$. In this case, invalid words are not created by identifying ends because $LLSL$ and $LSLL$ and $SLLS$ all are valid words, appearing by stage 3. The first five sequences obtained from the quasicrystal construction are then
\begin{equation} \label{eq:all_stages}
\begin{split}
&\text{stage 1: } \\
&\text{stage 2: } \\
&\text{stage 3: } \\
&\text{stage 4: } \\
&\text{stage 5: } \\
\end{split}
\begin{split}
L & \!\!\centerdot\!\!LS\\
LS & \!\!\centerdot\!\!LSL\\
LSL & \!\!\centerdot\!\!LSLLS\\
LSLLS & \!\!\centerdot\!\!LSLLSLSL\\
LSLLSLSL & \!\!\centerdot\!\!LSLLSLSLLSLLS.
\end{split}
\end{equation}
Notice that stage 1 appears at the center of stage 3, stage 3 appears at the center of stage 5, and so on. We can therefore consider the sequence of odd stages as a sequence of chains which grow at their ends. The infinite quasicrystal is defined as the infinite limit of this sequence. Note that if we took instead the even stages, we would have something locally indistinguishable from what we are using (see Example 4.6 of \cite{baake_grimm_2013}).


Our final task in this section is to compute values for the lengths of $S$ and $L$ so that the average distance between vertices (points between links) is $1$. The number of symbols at stage $n$ is a Fibonacci number\footnote{The Fibonacci numbers are defined by $F_0 = 0$, $F_1 = 1$ and $F_n = F_{n-1} + F_{n-2}$ for all $n > 1$. A straightforward induction proves the stronger statement that the numbers of copies of $S$ and $L$ at stage $n$ are $F_{n+1}$ and $F_{n+2}$ respectively.},
\begin{equation}
    F_{n+3}\approx\frac{1}{\sqrt{5}}\phi^{n+3}.
\end{equation}
Suppose we replace $S$ by an edge of length $1$ and $L$ by an edge
of length $\phi$ where 
\begin{equation}
    \phi=\frac{1+\sqrt{5}}{2}.
\end{equation}
(Thus $\phi^{2}=\phi+1$, $\phi^{3}=2\phi+1$, etc.). Since the total
length represented by $L$ is $\phi$ and the total length represented by $LS$ is $\phi+1$, the length of the resulting finite quasilattice will be growing at
each stage by the factor $\phi$. At stage $1$ the total length of
the quasilattice is $2\phi+1$ so the total length of the quasilattice
at stage $n$ is $\phi^{n+2}$. This means the average distance between vertices is
\begin{equation}
\frac{\phi^{n+2}}{F_{n+3}}\approx\frac{\sqrt{5}\phi^{n+2}}{\phi^{n+3}}=\frac{\sqrt{5}}{\phi}.
\end{equation}
To get the average distance between vertices to be $1$, we rescale,
and so use
\begin{equation}
|S|=\frac{\phi}{\sqrt{5}}=\frac{\phi+2}{5}\approx0.7236
\end{equation}
\begin{equation}
|L|=\frac{\phi^{2}}{\sqrt{5}}=\frac{3\phi+1}{5}\approx1.1708
\end{equation}

\subsection{The Fibonacci SSH model} \label{sec:Fib_SSH_2}

The original SSH model \cite{Su1979} describes hopping on a one-dimensional lattice where the hopping amplitudes alternate between two values, known as the inter-site and onsite hopping amplitudes. In the Fibonacci SSH model, the onsite hopping amplitude is held fixed while the inter-site hopping amplitude takes on one of two values, with the choice determined by the Fibonacci quasicrystal construction constructed in the previous section.

The model is defined on the Hilbert space $\ell^{2}(V)\otimes\mathbb{C}^{2}$ with $V$ the set of vertices of the infinite Fibonacci quasicrystal. As usual we define the one-dimensional position operator by
\begin{equation}
    X = \sum_{n \in V} x_n \sum_{m = 1}^2 \ket{\delta_n^m} \bra{\delta_n^m}
\end{equation}
where $x_n$ is the co-ordinate of each vertex. Let $\sigma_x$ denote the Pauli matrix
\begin{equation}
    \sigma_x = \begin{pmatrix} 0 & 1 \\ 1 & 0 \end{pmatrix}.
\end{equation}
The Hamiltonian is
\begin{equation}
    H = \sum_{n \in V} t_{o} \ket{\delta_n^2} \bra{\delta_n^1} + t_{i}(n,n+1) \ket{\delta_{n+1}^1} \bra{\delta_n^2} + h.c.
\end{equation}
where $h.c.$ denotes the Hermitian conjugate. Here $t_{o}$ is a real constant defining the onsite hopping amplitude, and $t_{i}(n,n+1)$ is the inter-site hopping amplitude, which depends on whether the link between vertices $n$ and $n+1$ is $S$ or $L$. We will take $t_o = 2.15$ and
\begin{equation} \label{eq:t_i_choice}
    t_i(n,n+1) = \begin{cases} 3.04 & \text{ if link between sites $n$ and $n+1$ is $S$} \\ 2.73 & \text{ if link between sites $n$ and $n+1$ is $L$.} \end{cases}
\end{equation}
With these choices, the inter-site hopping term is, on average, $2.85$. This is comparable to the original SSH model \cite{su1979solitons}, where the onsite and inter-site hopping strengths are $2.15$ and $2.85$ respectively.

The quasicrystal SSH model retains the chiral symmetry of the original periodic SSH model, i.e. 
\begin{equation}
    \{ \mathcal{S} , H \} = \mathcal{S} H + H \mathcal{S} = 0
\end{equation}
where $\mathcal{S} = I \otimes \sigma_z$. We can therefore consider the model as belonging to class BDI of the Altland-Zirnbauer classification of topological insulators \cite{altland1997nonstandard,kitaev-2009}. 

It is straightforward to compute the Bloch eigenvalue bands of the periodic SSH model \cite{su1979solitons,Asboth} as
\begin{equation}
    E_\pm(k) = \sqrt{ t_o^2 + t_i^2 + 2 t_o t_i \cos(k) } \quad k \in [-\pi,\pi]
\end{equation}
so that the bulk spectrum is exactly $[-|t_o - t_i|,-|t_o + t_i|] \cup [|t_o - t_i|,|t_0+t_i|]$. The bulk winding number is 1 
whenever $|t_o| < |t_i|$. When $t_0 = 2.15$ and $t_i = 2.85$ the bulk spectrum of the periodic SSH model is therefore 
\begin{equation}
    [-5,-0.7] \cup [0.7,5],
\end{equation}
and the bulk winding number is 1.

The infinite Fibonacci SSH Hamiltonian with $t_i$ chosen according to \eqref{eq:t_i_choice} is a perturbation whose size in the operator norm is bounded by $0.19$  of the standard SSH Hamiltonian which has gap $0.7$ and topological index $1$. By standard arguments the Fibonacci SSH Hamiltonian has a gap of at least $0.51$ and must also have topological index $1$. When the model is truncated and Dirichlet boundary conditions are imposed at both ends, we expect edge states, eigenvectors of the Hamiltonian supported near to the physical edge of the model, to occur.






We can compute an approximation of the spectrum and integrated DOS for the Fibonacci SSH model by imposing periodic boundary conditions on a finite chain of vertices. These computations are shown in Figures \ref{fig:H_per_spec} and \ref{fig:H_per_IDOS}. The computations confirm that in passing from the periodic SSH model to the Fibonacci SSH model the large spectral gap at zero persists, and suggest that the spectrum of the Fibonacci SSH model has new smaller gaps appearing within the bands of the periodic model.


\begin{figure}
\begin{centering}
\includegraphics[scale=.5]{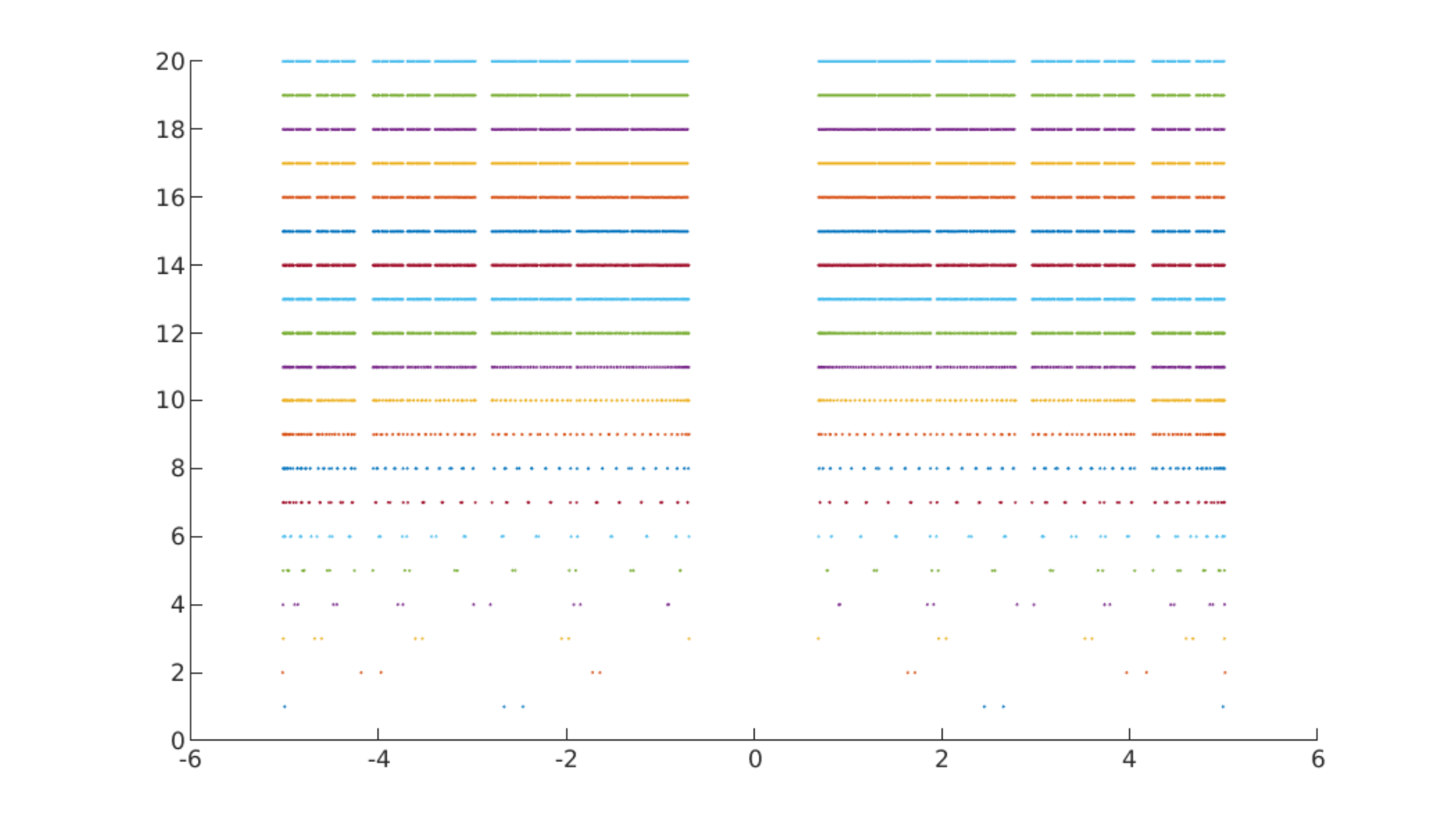}
\end{centering}
\caption{Spectra of the quasicrystal SSH model introduced in Section \ref{sec:Fib_SSH_2} computed by imposing periodic boundary conditions on a finite chain, for increasing stages of the Fibonacci quasicrystal construction (equivalently, increasing system sizes). Quasicrystal stage number \eqref{eq:all_stages} is shown on the $y$ axis, eigenvalues along the $x$ axis. The computations appear to converge to a limit spectrum with a large gap at $0$ and several smaller gaps within the bands of the periodic SSH model.}
\label{fig:H_per_spec}
\end{figure}

\begin{figure}
\begin{centering}
\includegraphics[scale=.5]{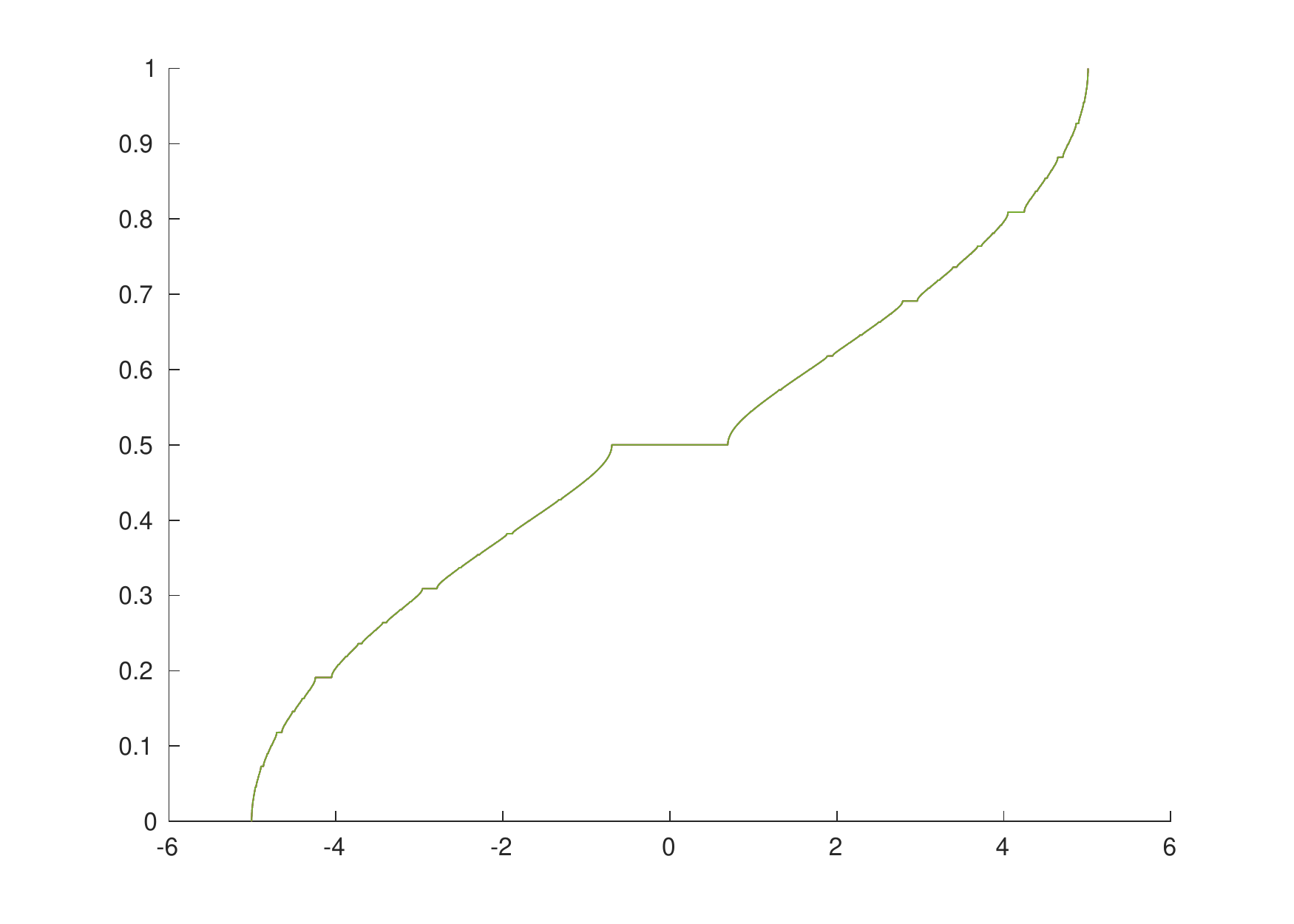}
\end{centering}
\caption{Integrated density of states (IDOS) for stages 16 through 20 of the quasicrystal SSH model introduced in Section \ref{sec:Fib_SSH_2}, computed by imposing periodic boundary conditions on the finite chain. The IDOS for successive stages are indistinguishable, demonstrating the convergence of the computations as stage number (equivalently system size) is increased. The density shows a clear gap at $0$ and smaller gaps away from zero.}
\label{fig:H_per_IDOS}
\end{figure}

\section{Investigation of the wLDOS for the Fibonacci SSH model} \label{sec:numerical_expts}

In this section we show computations of the wLDOS for the Fibonacci SSH model introduced in the previous section. For real positive $\eta > 0$, we define a Gaussian energy window with standard deviation $\eta$ by
\begin{equation} \label{eq:eta_f}
    f_{\eta}(\xi) = e^{- (\eta^{-1} \xi)^2}.
\end{equation}
To avoid diagonalization of the Hamiltonian, we approximate $f_\eta$ by a 14th order polynomial. We define a position window by
\begin{equation}
    g_2(\xi) := g(2 \xi),
\end{equation}
where $g(\xi)$ is as in \eqref{eq:f}, so that $g_2(\xi)$ is supported on the interval $[-1,1]$. This window function is plotted in Figure \ref{fig:Windows-for-later}.
With these window functions, we compute the wLDOS in the bulk (Figure \ref{fig:GLDOS_bulk_pictures}) and at the edge (Figure \ref{fig:GLDOS_end_pictures}) of finite truncations of the model. By our theoretical results we know that when the truncation is sufficiently far from the point of interest that the results will be identical to those obtained if we were able to compute with the fully infinite model (half-infinite when we look at the edge). We can computationally test locality of the wLDOS by comparing the computed wLDOS as we increase the system size (Figures \ref{fig:GLDOS_bulk_data} and \ref{fig:GLDOS_end_data}). We find that in practice the computed wLDOS converges quickly as system size is increased, suggesting the wLDOS is more local than our theoretical guarantees.

\begin{figure}
\raisebox{4cm}{(a)}\includegraphics[scale=.75]{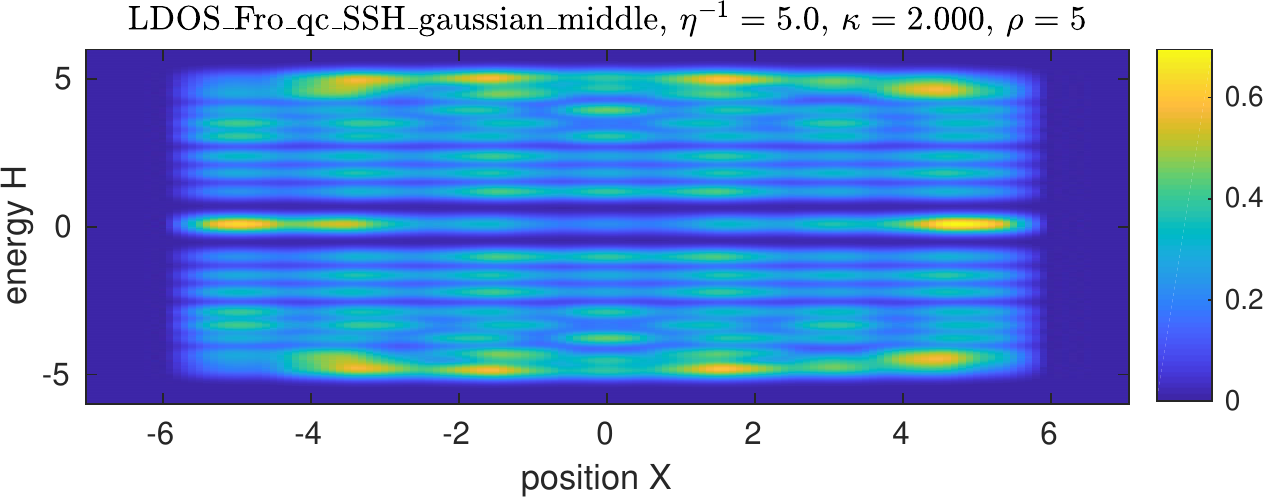}\\
\raisebox{4cm}{(b)}\includegraphics[scale=.75]{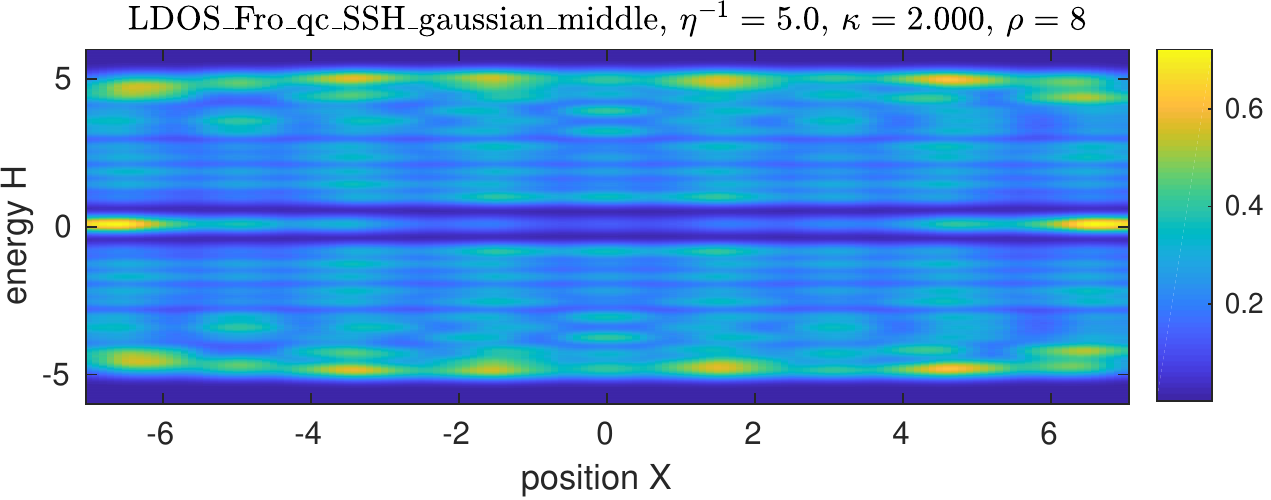}\\
\raisebox{4cm}{(c)}\includegraphics[scale=.75]{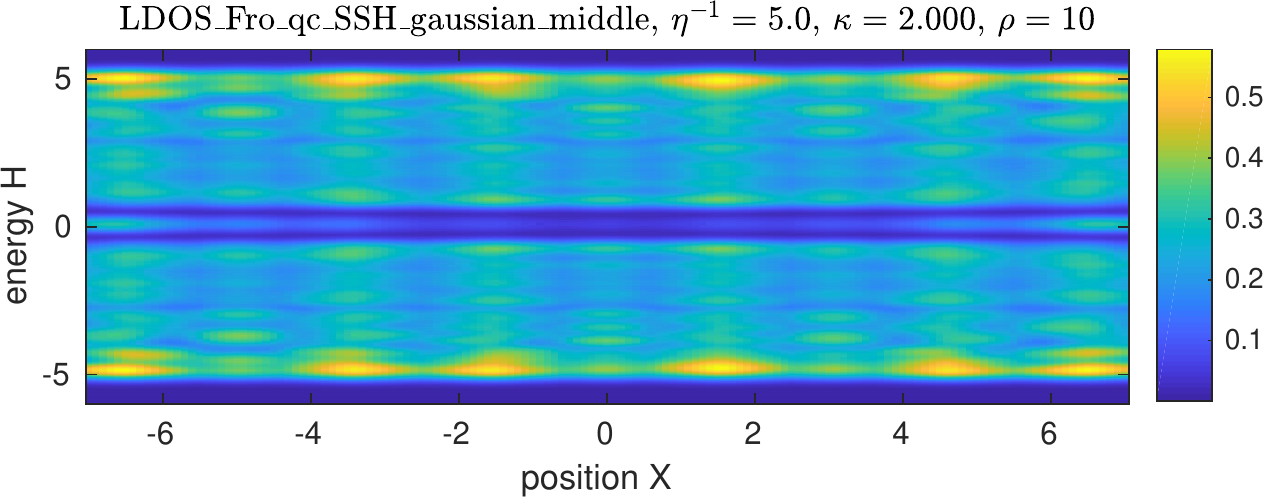}\\
\raisebox{4cm}{(d)}\includegraphics[scale=.75]{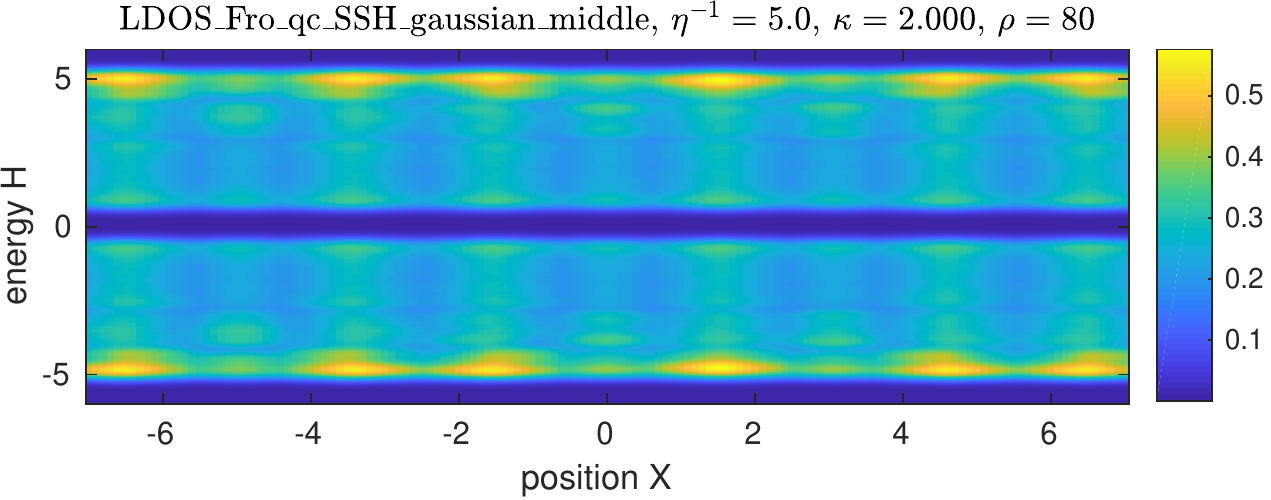}

\caption{Computed wLDOS in the center (bulk) of a finite quasicrystal SSH chain from $-7$ to $7$ for different system sizes. The inverse of the standard deviation of the energy window is fixed at $\eta^{-1}=5$. In (a), the chain extends from (roughly) $-5$ to $5$. In (b), the chain extends from $-8$ and $8$. In (c), from $-10$ to $10$. In (d), from $-80$ to $80$. Exponentially-decaying edge modes in the bulk spectral gap at $0$ are clearly visible in (a)-(c). In (d), the edges of the chain are sufficiently far away that the edge modes do not appear. The bulk gap at $0$ and smaller gaps away from zero, previously seen in Figures \ref{fig:H_per_spec} and \ref{fig:H_per_IDOS}, are clearly visible.
}
\label{fig:GLDOS_bulk_pictures}
\end{figure}

\begin{figure}
\includegraphics[scale=.75]{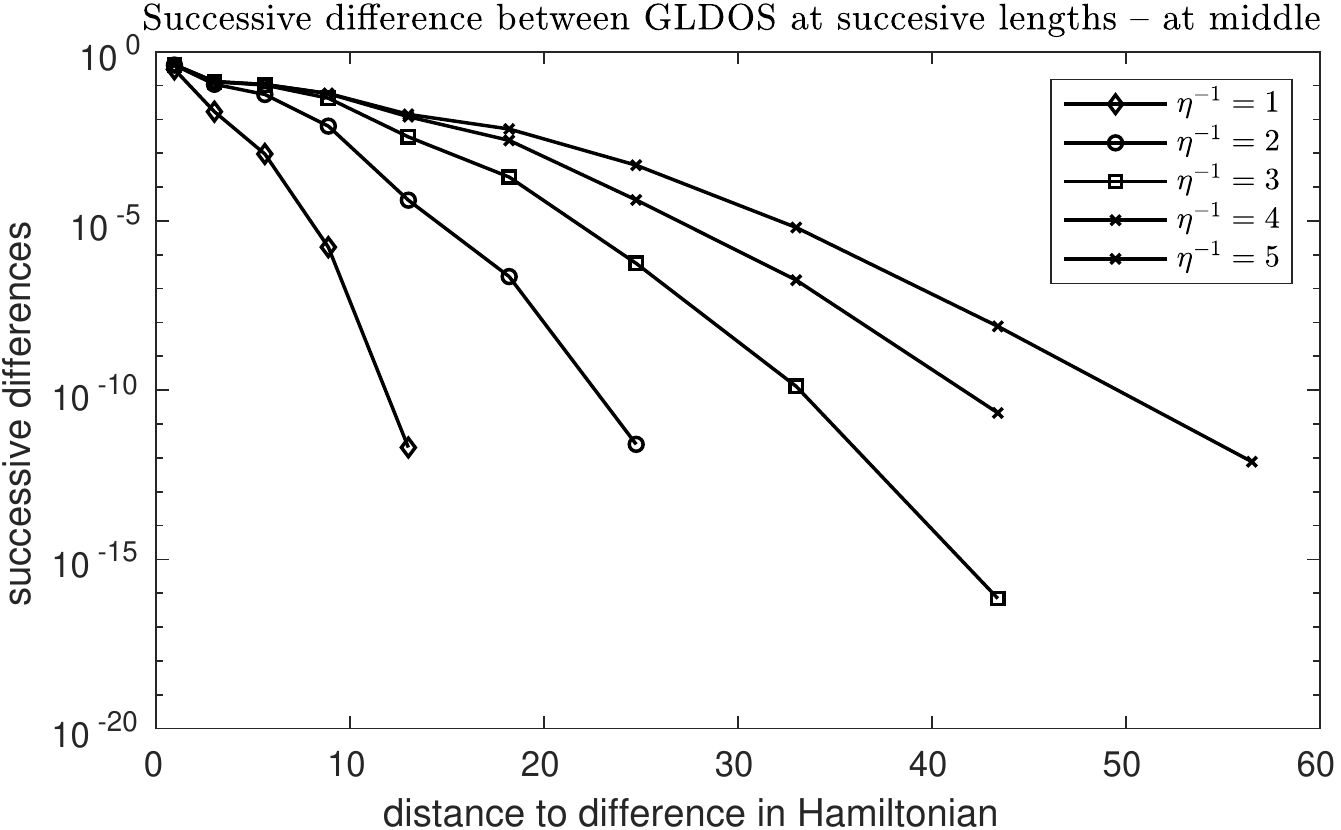}

\caption{Difference in successive terms in sequence of wLDOS computed with increasing system lengths, for energy windows of different widths. For wide energy windows (small $\eta^{-1}$), the sequence shows rather quick convergence to machine precision, as should be expected from the uncertainty principle.
}
\label{fig:GLDOS_bulk_data}
\end{figure}



\begin{figure}
\raisebox{4cm}{(a)}\includegraphics[scale=.75]{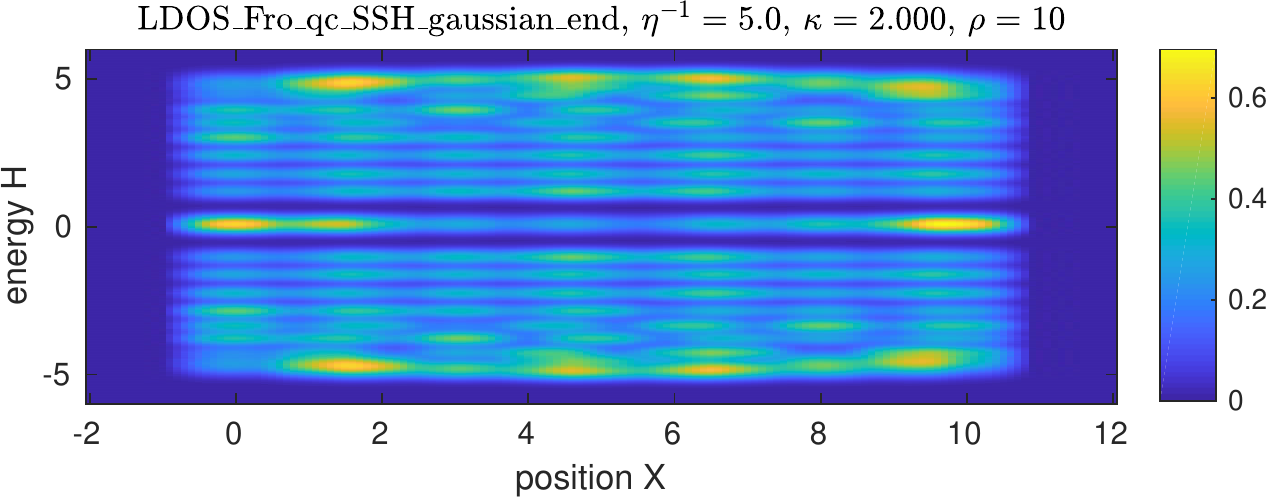}\\
\raisebox{4cm}{(b)}\includegraphics[scale=.75]{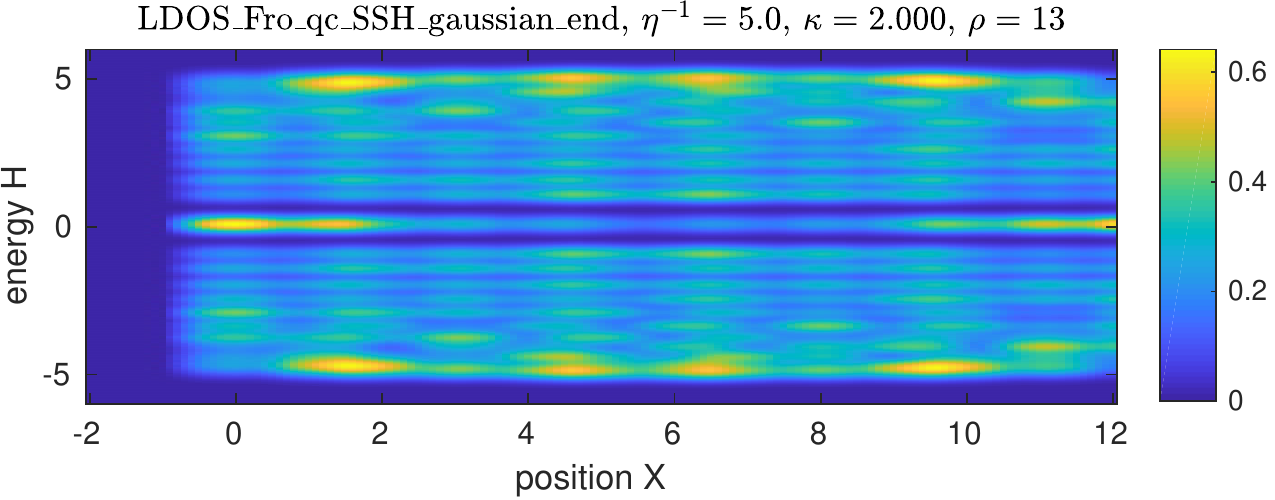}\\
\raisebox{4cm}{(c)}\includegraphics[scale=.75]{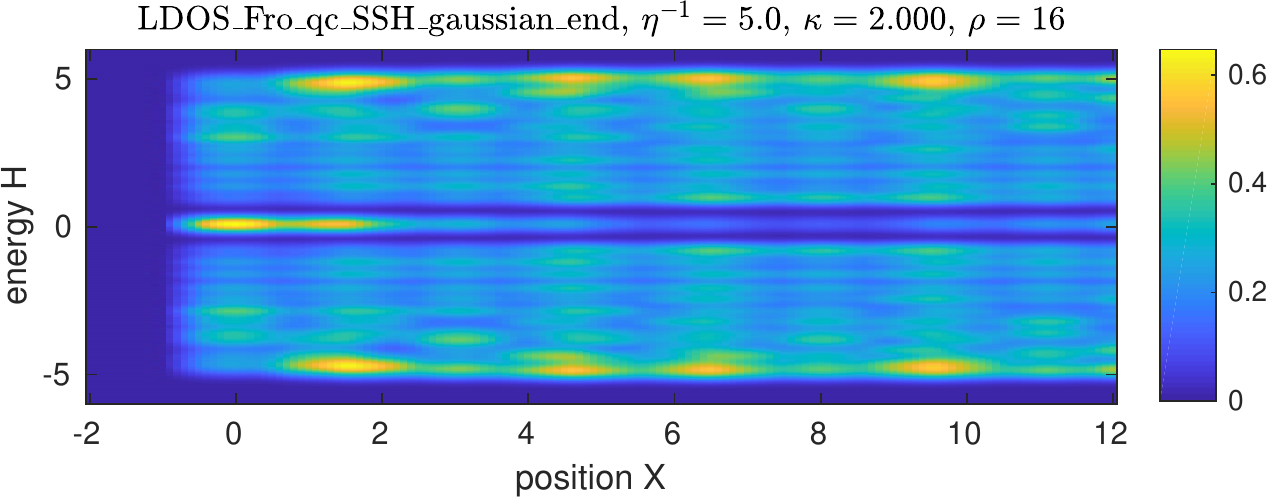}\\
\raisebox{4cm}{(d)}\includegraphics[scale=.75]{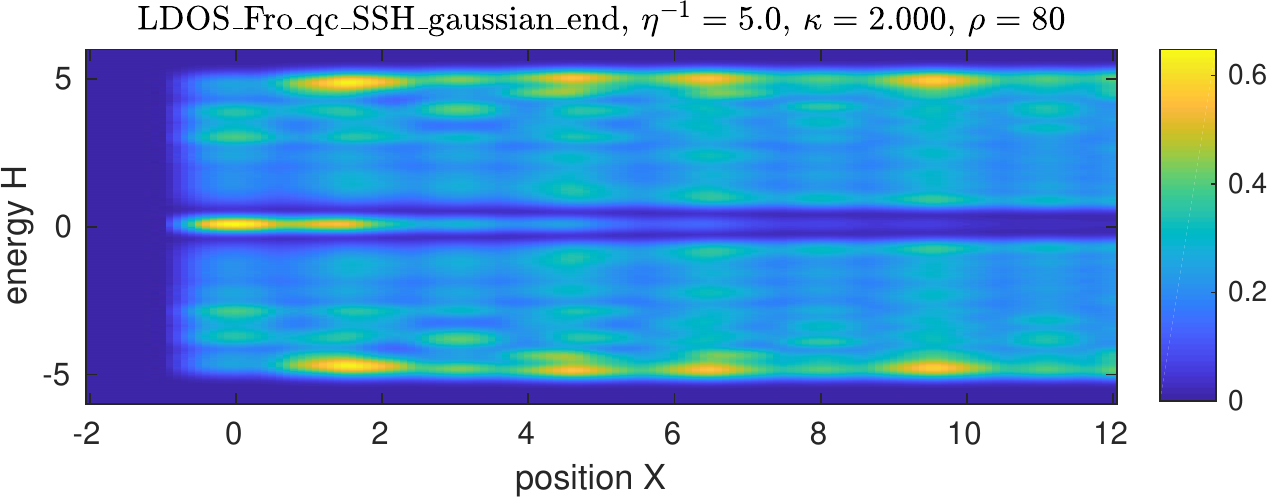}
\caption{Computed wLDOS at the edge of a finite quasicrystal SSH chain from $-1$ to $12$ for various system sizes. The inverse of the standard deviation of the energy window is fixed at $\eta^{-1} = 5$. In (a), the chain extends from (roughly) $0$ to $10$. In (b) the chain extends from $0$ to $13$. In (c), from $0$ to $16$. In (d), from $0$ to $80$. In all figures, the edge mode with energy in the bulk spectral gap can be clearly seen.
}
\label{fig:GLDOS_end_pictures}
\end{figure}

\begin{figure}
\includegraphics[scale=.75]{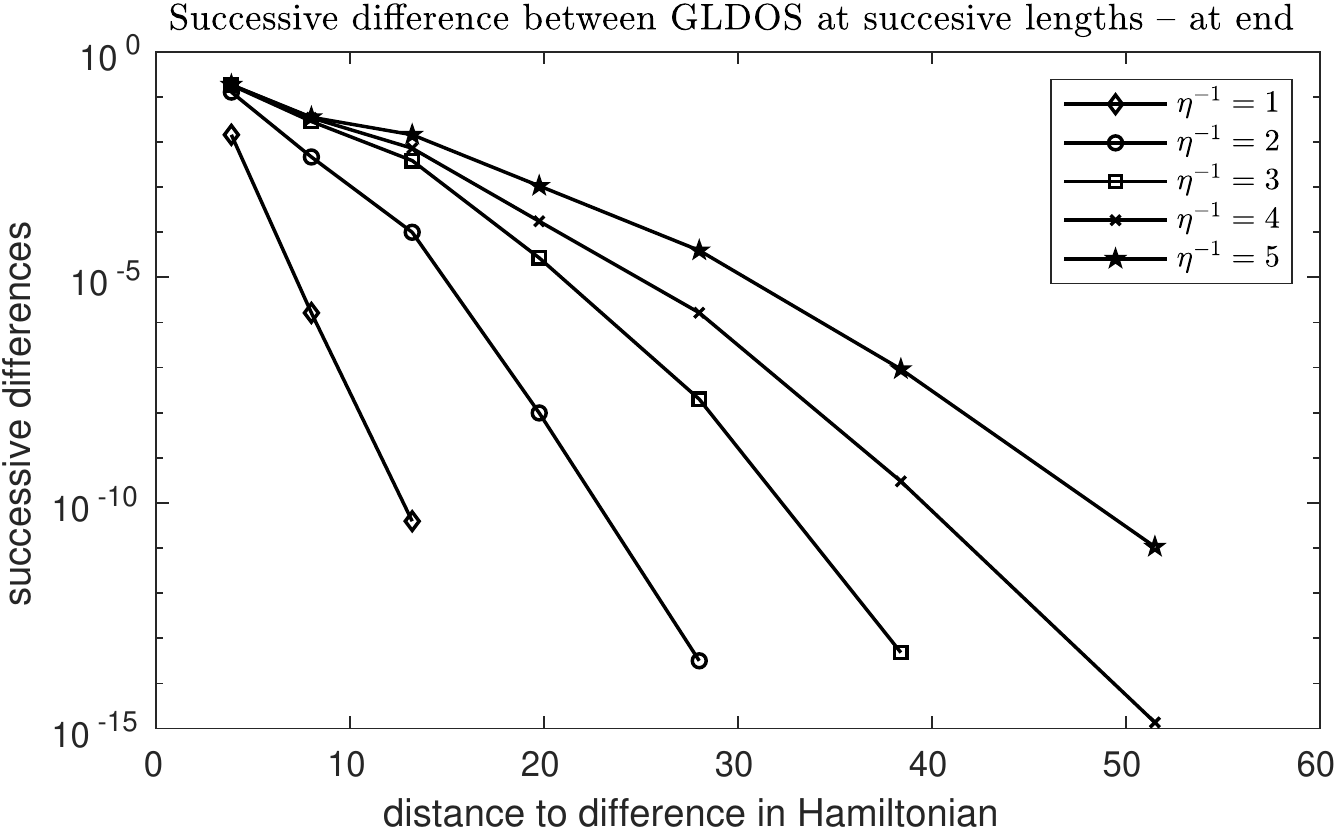}
\caption{
Difference in successive terms in sequence of wLDOS computed at edge with increasing system lengths. The sequences show rather quick convergence to machine precision, especially for wide energy windows (small $\eta^{-1}$).
}
\label{fig:GLDOS_end_data}
\end{figure}

\section{Conclusion}

In this work we have proposed a variant of the usual LDOS called the
windowed local density of states or wLDOS where the LDOS is
``windowed'' with respect to energy and position. We proved that for
finite systems the wLDOS generalizes the usual LDOS in the sense that
for narrow window functions the wLDOS reduces to the usual LDOS. We
proved that the wLDOS is local in the sense that it can be computed
accurately from a finite truncation of the Hamiltonian about the point
of interest. We used this property to show that under natural locality
assumptions on the Hamiltonian the wLDOS is well-defined and
computable even for tight-binding models on infinite domains. We
finally investigated the wLDOS for the ``Fibonacci SSH'' model, a one
dimensional aperiodic model which nonetheless has a non-trivial bulk
index and associated topological edge states. Our computations show
that the wLDOS is considerably more local than our theory guarantees,
suggesting our locality estimates might be further improved.

These results demonstrate that the wLDOS can be a useful tool for computing and visualizing material properties. Since our theory allows for computing the wLDOS for both finite and infinite models, it may be useful for predicting when finite size effects will be important in computations and/or experiments.

For future directions, we mention two potential applications of our results. First, to compute the DOS of a large finite system, instead of diagonalizing the Hamiltonian, one could average the wLDOS over the system. This computation would require many computations of the wLDOS, but importantly, using locality, these computations could be performed in parallel. This scheme could be applied, for example, to compute the DOS of a large finite quasicrystal as studied by one of the authors in \cite{2019Loring_2}. Similar schemes have been employed in the study of incommensurate layered materials, see \cite{2017MassattLuskinOrtner,doi:10.1137/17M1141035,Carr2017}. 


The second potential application involves an apparent link between
the wLDOS and the spectral localizer \cite{2011HastingsLoring,2015Loring,2019Loring}.
If one uses a broad window $g$ in position and a modest window $f$
in energy then $g(X)$ and $f(H)$ become almost commuting matrices.
Given the known connection \cite{2015Loring} between almost commuting
matrices, $K$-theory, and the spectral localizer, it should be interesting
to look at the wLDOS on topological systems. Moreover, it should be
useful to work with versions of the wLDOS using other norms,
c.f. \eqref{eq:trace_frob}.




\appendix

\section{Proofs of Lemma \ref{lem:locality_opnorm} and Proposition \ref{prop:k_alpha}} \label{app:locality_lemmas}

In this section we give the proofs of Lemma \ref{lem:locality_opnorm} and Proposition \ref{prop:k_alpha}.

\subsection{Proof of Lemma \ref{lem:locality_opnorm}}

Before we can give the proof of Lemma \ref{lem:locality_opnorm} we require two preliminary Lemmas. Note that we take $x = E = 0$ for simplicity.
\begin{lemma} \label{lem:first}
Suppose that $A$ and $B$ are bounded linear operators, with $A$ Hermitian. Then
\begin{equation}
    \| [ e^{i t A} , B ] \| \leq |t| \| [ A , B ] \|.
\end{equation}
\end{lemma}
\begin{proof}
First note that
\begin{equation}
    \fdf{t} \Bigl( e^{i t A} B e^{- i t A}\Bigr) = i e^{i t A} [ A, B ] e^{- i t A}. 
\end{equation}
Integrating from $0$ to $t$, we get 
\begin{equation} \label{eq:this}
   e^{i t A} B e^{- i t A} - B = i \inty{0}{t}{ e^{i s A} [ A, B ] e^{- i s A} }{s}.
\end{equation}
We now move to proving the estimate starting with
\begin{equation}
    \| [ e^{i t A} , B ] \| = \| e^{i t A} B - B e^{i t A} \| = \| e^{i t A} B e^{- i t A} - B \|.
\end{equation}
Using \eqref{eq:this} we now have
\begin{equation}
    \| [ e^{i t A} , B ] \| = \left\| \inty{0}{t}{ e^{i s A} [ A, B ] e^{- i s A} }{s} \right\| \leq |t| \| [A,B] \|. \qedhere
\end{equation}
\end{proof}
\begin{lemma} \label{lem:lucien}
Let $H$ and $X$ be self-adjoint operators with $H$ bounded. Let $g$ and $k$ be functions on $\mathbb{R}$ with $0 \leq g \leq 1$, $0 \leq k \leq 1$ and $k g = g$. Then
\begin{equation}
    \| g^{\frac{1}{2}}(X) e^{i t H} g^{\frac{1}{2}}(X) - g^{\frac{1}{2}}(X) e^{i t k(X) H k(X)} g^{\frac{1}{2}}(X) \| \leq |t| ( 1 + |t| \| H \| ) \| [k(X),H] \|.
\end{equation}
\end{lemma}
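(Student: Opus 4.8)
The plan is to combine Duhamel's formula with the purely algebraic consequences of the hypothesis $kg=g$. First I would fix the shorthand $G := g^{\frac{1}{2}}(X)$, $K := k(X)$, and $H_k := KHK$, and record the identities that make the truncation harmless. Since $kg=g$ forces $k=1$ on the support of $g$, one also has $k\,g^{\frac{1}{2}} = g^{\frac{1}{2}}$, hence $GK = KG = G$, and therefore the cancellations $G(1-K) = (1-K)G = 0$. Note as well that $\|G\|\le 1$ and $\|K\|\le 1$ because $0\le g,k\le 1$, and that $H_k$ is bounded because $H$ is. These observations are exactly what allow a cutoff placed directly against a $G$ to be discarded.

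Next I would apply Duhamel's formula to the two propagators: differentiating $s \mapsto e^{i(t-s)H_k}e^{isH}$ and integrating from $0$ to $t$ gives
\begin{equation}
    e^{itH} - e^{itH_k} = i \inty{0}{t}{ e^{i(t-s)H_k}\,(H - H_k)\,e^{isH} }{s}.
\end{equation}
Sandwiching between the two $G$'s and using the elementary splitting $H - H_k = (1-K)H + KH(1-K)$, the quantity to be bounded becomes a sum of two integrals with integrands
\begin{equation}
    G\,e^{i(t-s)H_k}(1-K)\,H\,e^{isH}\,G \qquad\text{and}\qquad G\,e^{i(t-s)H_k}\,KH\,(1-K)\,e^{isH}\,G.
\end{equation}

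The heart of the argument is to exploit $G(1-K)=0$ and $(1-K)G=0$ despite the fact that a propagator separates the cutoff $(1-K)$ from the adjacent $G$. Commuting $(1-K)$ through the intervening unitary converts the dangerous non-commutator factor into a commutator: for the first integrand $G\,e^{i(t-s)H_k}(1-K) = -G\,\bigl[\,e^{i(t-s)H_k},K\,\bigr]$, and for the second $(1-K)\,e^{isH}G = \bigl[\,e^{isH},K\,\bigr]G$. Lemma \ref{lem:first} then bounds these commutators by $|t-s|\,\|[H_k,K]\|$ and $|s|\,\|[H,K]\|$ respectively, while the identity $[KHK,K] = K[H,K]K$ together with $\|K\|\le 1$ gives $\|[H_k,K]\| \le \|[K,H]\|$. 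Since the remaining factors ($G$, the two unitaries, $K$, and the single surviving copy of $H$) contribute at most $\|H\|$ in norm, each integrand is controlled by $|t-s|\,\|H\|\,\|[K,H]\|$ or $|s|\,\|H\|\,\|[K,H]\|$. Integrating over $s\in[0,t]$ and using $|t-s|+|s| = |t|$ there yields a bound of the form $|t|^2\,\|H\|\,\|[K,H]\|$, which is in particular at most $|t|\bigl(1 + |t|\,\|H\|\bigr)\|[K,H]\|$, the claimed estimate.

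The main obstacle, and the only genuinely nontrivial point, is precisely this commutation step: the cancellation provided by $kg=g$ is available only when $(1-K)$ sits immediately against a $G$, whereas Duhamel inserts the propagators in between. The function of Lemma \ref{lem:first} is to absorb exactly the cost of dragging a spatial cutoff past a propagator, and the reduction $\|[KHK,K]\|\le\|[K,H]\|$ is what keeps the final bound expressed through the single commutator $[k(X),H]$ rather than through $\|H\|$ alone. Everything else is routine, and all manipulations are legitimate because $H$ (hence $H_k$) is bounded, as assumed.
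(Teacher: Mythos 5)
Your proof is correct, and it actually lands on the slightly sharper bound $|t|^2\,\|H\|\,\|[k(X),H]\|$, which has no first-order term in $|t|$ and is dominated by the stated $|t|(1+|t|\,\|H\|)\,\|[k(X),H]\|$. The key ingredient is the same as in the paper --- Lemma \ref{lem:first} controlling $\|[e^{itA},B]\|$ by $|t|\,\|[A,B]\|$ --- but the decomposition is genuinely different. The paper first factors out the unitary $e^{-itH}$ to reduce the problem to bounding $\|g^{\frac12}(X)-e^{-itH}e^{itk(X)Hk(X)}g^{\frac12}(X)\|$, then differentiates $s\mapsto e^{-isH}e^{isk(X)Hk(X)}g^{\frac12}(X)$ and rearranges the derivative using only the right-hand cutoff identity $k(X)g^{\frac12}(X)=g^{\frac12}(X)$; this produces three terms, one bounded directly by $\|[k(X),H]\|$ (the source of the $O(|t|)$ term in the final estimate) and two bounded via Lemma \ref{lem:first}. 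You instead apply the symmetric Duhamel formula to $e^{itH}-e^{itk(X)Hk(X)}$, split $H-k(X)Hk(X)=(1-k(X))H+k(X)H(1-k(X))$, and exploit the cutoff identities on \emph{both} sides of the sandwich, so that every surviving term contains a commutator of $k(X)$ with a propagator and hence carries a factor $|s|$ or $|t-s|$; this symmetric use of the two copies of $g^{\frac12}(X)$ is exactly what eliminates the $O(|t|)$ contribution. Your reduction $\|[k(X)Hk(X),k(X)]\|=\|k(X)[H,k(X)]k(X)\|\le\|[k(X),H]\|$ is the same observation the paper makes when applying Lemma \ref{lem:first} to $e^{itk(X)Hk(X)}$, and all steps are justified under the stated hypotheses ($H$ bounded, $0\le g,k\le 1$, $kg=g$). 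In short: same strategy, different and slightly more efficient bookkeeping.
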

\begin{proof}
First note that
\begin{equation} \label{eq:main_estimate}
\begin{split}
    &\left\| g^{\frac{1}{2}}(X) e^{i t H} g^{\frac{1}{2}}(X) - g^{\frac{1}{2}}(X) e^{i t k(X) H k(X)} g^{\frac{1}{2}}(X) \right\| \\
    & \quad \leq \left\| g^{\frac{1}{2}}(X) - e^{- i t H} e^{i t k(X) H k(X)} g^{\frac{1}{2}}(X) \right\| \\
    & \quad \leq \inty{0}{t}{ \left\| \fdf{s} \left( g^{\frac{1}{2}}(X) - e^{- i s H} e^{i s k(X) H k(X)} g^{\frac{1}{2}}(X) \right) \right\| }{s} \\
    & \quad = \inty{0}{t}{ \left\| \fdf{s} \left( e^{- i s H} e^{i s k(X) H k(X)} g^{\frac{1}{2}}(X) \right) \right\| }{s}.
\end{split}
\end{equation}
Clearly,
\begin{equation} \label{eq:derivative}
    \fdf{t} \left( e^{- i t H} e^{i t k(X) H k(X)} g^{\frac{1}{2}}(X) \right) = i e^{- i t H} \left( k(X) H k(X) - H \right) e^{i t k(X) H k(X)} g^{\frac{1}{2}}(X).
\end{equation}
Re-arranging the right-hand side of \eqref{eq:derivative} and using $k g = k$ gives 
\begin{equation} \label{eq:derivative_2}
\begin{split}
    &\fdf{t} \left( e^{- i t H} e^{i t k(X) H k(X)} g^{\frac{1}{2}}(X) \right)    \\
    &\quad = i e^{- i t H} \left( [k(X),H] e^{i t k(X) H k(X)} \right) g^{\frac{1}{2}}(X)  \\
    &\quad \quad + i e^{- i t H} \left( k(X) H [ k(X) , e^{i t k(X) H k(X)} ] + H [ k(X) , e^{i t k(X) H k(X)} ] \right) g^{\frac{1}{2}}(X).
\end{split}
\end{equation}
The first term on the right-hand side of \eqref{eq:derivative_2} can be bounded by $\| [k(X),H] \|$. Using Lemma \ref{lem:first} we have
\begin{equation}
    \| [ k(X), e^{i t k(X) H k(X)} ] \| \leq |t| \| [ k(X) , k(X) H k(X) ] \| = |t| \| [ k(X) , H ] \|
\end{equation}
and hence the second two terms on the right-hand side of \eqref{eq:derivative_2} are bounded by $2 |t| \| H \| \| [ k(X), H ] \|$. Substituting these estimates into \eqref{eq:main_estimate} we have 
\begin{equation} \label{eq:main_estimate_2}
\begin{split}
    &\left\| g^{\frac{1}{2}}(X) e^{i t H} g^{\frac{1}{2}}(X) - g^{\frac{1}{2}}(X) e^{i t k(X) H k(X)} g^{\frac{1}{2}}(X) \right\| \\
    & \quad \leq \inty{0}{t}{ \| [k(X),H] \| + 2 |s| \| H \| \| [k(X),H] \| }{s}    \\
    & \quad \leq |t| ( 1 + |t| \| H \| ) \| [k(X),H] \|
\end{split}
\end{equation}
as required.
\end{proof}

We can now give the proof of Lemma \ref{lem:locality_opnorm}.

\begin{proof} [proof of Lemma \ref{lem:locality_opnorm}]
Let $\ell$ be the inverse Fourier transform of $f'$, so
\begin{equation}
    f'(\xi)=\int_{-\infty}^{\infty}\ell(t)e^{it\xi}\,dt.
\end{equation}
For bounded Hermitian operators $K$, we have by functional calculus that
\begin{equation}
    f(K)=f(0)I+\int_{-\infty}^{\infty}\frac{\ell(t)}{it}(e^{itK}-I)\,dt.
\end{equation}
It then follows that
\begin{equation}
    g^{\frac{1}{2}}(X)f(K)g^{\frac{1}{2}}(X)=f(0)g(X)+\int_{-\infty}^{\infty}\frac{\ell(t)}{it}\left(g^{\frac{1}{2}}(X)e^{itK}g^{\frac{1}{2}}(X)-g(X)\right)\,dt.
\end{equation}
Comparing this identity with $K=H$ and $K=k(X)Hk(X)$ we find
\begin{align*}
 & g^{\frac{1}{2}}(X)f(H)g^{\frac{1}{2}}(X)-g^{\frac{1}{2}}(X)f(k(X) H k(X))g^{\frac{1}{2}}(X)\\
 & \qquad=\int_{-\infty}^{\infty}\frac{\ell(t)}{it}\left(g^{\frac{1}{2}}(X)e^{itH}g^{\frac{1}{2}}(X)-g^{\frac{1}{2}}(X)e^{itk(X)Hk(X)}g^{\frac{1}{2}}(X)\right)\,dt.
\end{align*}
Therefore 
\begin{align*}
 & \left\Vert g^{\frac{1}{2}}(X)f(H)g^{\frac{1}{2}}(X)-g^{\frac{1}{2}}(X)f(k(X) H k(X) ) g^{\frac{1}{2}}(X)\right\Vert \\
 & \qquad\leq\int_{-\infty}^{\infty}\frac{\left|\ell(t)\right|}{\left|t\right|}\left( |t| ( 1 + |t| \| H \| ) \| [k(X),H] \| \right)\,dt\\
 & \qquad= \left\Vert \left[k(X),H\right]\right\Vert \int_{-\infty}^{\infty} ( 1 + | t | \| H \| ) \left| \ell(t)\right|\,dt.
\end{align*}
Since $\ell(t) = i t \widehat{f}(t)$ the statement follows.
\end{proof}

\subsection{Proof of Proposition \ref{prop:k_alpha}}

We now prove Proposition \ref{prop:k_alpha}. We again start with a preliminary Lemma.

\begin{lemma} \label{lem:second}
Suppose that $A$ and $B$ are bounded matrices, with $A$ Hermitian, and suppose $k \in C^1(\field{R})$ is such that
\begin{equation} \label{eq:f_deriv}
    k'(x) = \inty{-\infty}{\infty}{ \ell(t) e^{i t x} }{t}
\end{equation}
for some $\ell(t) \in L^1(\field{R})$. Then
\begin{equation}
    \| [ k(A), B ] \| \leq \| \ell \|_{L^1} \| [ A, B ] \|.
\end{equation}
\end{lemma}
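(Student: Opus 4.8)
The plan is to mirror the functional-calculus argument used in the proof of Lemma~\ref{lem:locality_opnorm}, with the role of $f$ there played by $k$ here, and to invoke Lemma~\ref{lem:first} for the single-exponential commutator bound. The hypothesis \eqref{eq:f_deriv} gives a Fourier representation of $k'$ rather than of $k$, so the first task is to integrate it up to a representation of $k$ itself.

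First I would write $k(x) - k(0) = \int_0^x k'(s)\,ds$, substitute the representation \eqref{eq:f_deriv}, and interchange the order of integration. This is legitimate because $\ell \in L^1$ and $\int_0^x e^{its}\,ds = (e^{itx}-1)/(it)$ is bounded by $|x|$, so the apparent singularity at $t=0$ is removable. This yields
\begin{equation}
k(x) = k(0) + \int_{-\infty}^{\infty} \frac{\ell(t)}{it}\bigl(e^{itx} - 1\bigr)\,dt.
\end{equation}
By the spectral theorem applied to the bounded Hermitian operator $A$, this lifts to the operator identity
\begin{equation}
k(A) = k(0) I + \int_{-\infty}^{\infty} \frac{\ell(t)}{it}\bigl(e^{itA} - I\bigr)\,dt.
\end{equation}

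Next I would commute with $B$. Since $k(0)I$ and $I$ commute with everything, only the exponential term survives, giving $[k(A),B] = \int_{-\infty}^{\infty} \frac{\ell(t)}{it}\,[e^{itA}, B]\,dt$. Taking operator norms, moving the norm inside the integral, and applying Lemma~\ref{lem:first} in the form $\|[e^{itA},B]\| \leq |t|\,\|[A,B]\|$ gives
\begin{equation}
\|[k(A),B]\| \leq \int_{-\infty}^{\infty} \frac{|\ell(t)|}{|t|}\,|t|\,\|[A,B]\|\,dt = \|\ell\|_{L^1}\,\|[A,B]\|,
\end{equation}
which is the claimed estimate. Note that the weight $1/|t|$ is cancelled exactly by the factor $|t|$ produced by Lemma~\ref{lem:first}, so no regularity of $\ell$ beyond $L^1$-integrability is required.

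The only genuine subtlety, and the step I would treat most carefully, is the justification of the scalar integral representation and its lift to the operator identity: one must verify convergence of the integral (controlled by the removable singularity at $t=0$), that Fubini applies when interchanging $\int_0^x$ with $\int_{-\infty}^{\infty}$, and that the functional calculus commutes with the integral. All of these are routine given $\ell \in L^1$ and the boundedness of $A$, exactly paralleling the treatment of $f$ in the proof of Lemma~\ref{lem:locality_opnorm}, so I do not expect a real obstacle beyond this bookkeeping.
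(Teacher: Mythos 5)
Your proposal is correct and follows essentially the same route as the paper's proof: integrate the Fourier representation of $k'$ to obtain $k(A) = k(0)I + \int \frac{\ell(t)}{it}(e^{itA}-I)\,dt$, commute with $B$, and apply Lemma \ref{lem:first} so that the $|t|$ factor cancels the $1/|t|$ weight. The subtleties you flag (removable singularity at $t=0$, Fubini, lifting to the functional calculus) are handled the same way in the paper, which justifies convergence of the operator integral via $\|e^{itA}-I\|\leq |t|\,\|A\|$.
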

\begin{proof}
Using \eqref{eq:f_deriv}, we have
\begin{equation}
\begin{split}
    k(x) &= k(0) + \inty{0}{x}{ k'(y) }{y}   \\
    &= k(0) + \inty{0}{x}{ \inty{-\infty}{\infty}{ \ell(t) e^{i t y} }{t} }{y}   \\
    &= k(0) + \inty{-\infty}{\infty}{ \ell(t) \inty{0}{x}{ e^{i t y} }{y} }{t}    \\
    &= k(0) + \inty{-\infty}{\infty}{ \ell(t) \left( \frac{ e^{i t x} - 1 }{ i t } \right) }{t}.
\end{split}
\end{equation}
Hence
\begin{equation} \label{eq:f_2}
    k(A) = k(0) I + \inty{-\infty}{\infty}{ \frac{ \ell(t) }{ i t } \left( e^{i t A} - 1 \right) }{t},
\end{equation}
where the integral is well-defined because
\begin{equation}
    \| e^{i t A} - 1 \| \leq | t | \| A \|
\end{equation}
(to see this differentiate the operator on the left-hand side). From \eqref{eq:f_2}, we have that
\begin{equation}
    [ k(A) , B ] = \inty{-\infty}{\infty}{ \frac{ \ell(t) }{ i t } [ e^{i t A} , B ] }{t}.
\end{equation}
The result now follows by Lemma \ref{lem:first}.
\end{proof}
We are now in a position to give an explicit construction of $k_\alpha(\xi)$ equalling $1$ for all $\xi \in [-L,L]$ and satisfying \eqref{eq:kXH}.
\begin{proof}[Proof of Proposition \ref{prop:k_alpha}]
Fix $L > 0$, and define $M_L$ to be the smallest integer such that $2 M_L > L$. Define $k(\xi)$ as in \eqref{eq:f}, i.e.
\begin{equation} \label{eq:f_3}
    k(\xi) = \begin{cases} 0 & \xi \leq -2 \\ \frac{1}{2} (\xi + 2)^2 & -2 < \xi \leq -1 \\ 1 - \frac{1}{2} \xi^2 & -1 < \xi \leq 1 \\ \frac{1}{2} (\xi - 2)^2 & 1 < \xi \leq 2 \\ 0 & 2 \leq \xi. \end{cases}
\end{equation}
It is clear that
\begin{equation}
    k(\xi) + k(\xi-2) = \begin{cases} 0 & \xi \leq -2 \\ \frac{1}{2}(\xi+2)^2 & -2 < \xi \leq -1 \\ 1 - \frac{1}{2}\xi^2 & -1 < \xi \leq 0 \\ 1 & 0 < \xi \leq 2 \\ 1 - \frac{1}{2} (\xi - 2)^2 & 2 \leq \xi < 3 \\ \frac{1}{2}(\xi - 4)^2 & 3 < \xi \leq 4 \\ 0 & 4 < \xi, \end{cases}
\end{equation}
and more generally, 
\begin{equation}
    k_1(\xi) := \sum_{l = -M}^M k(\xi - 2 l) = \begin{cases} 0 & \xi \leq - 2 M - 2 \\ \frac{1}{2}(\xi + 2 M + 2)^2 & - 2 M - 2 \leq \xi \leq - 2 M - 1 \\ 1 - \frac{1}{2}(\xi + 2M)^2 & - 2 M - 1 \leq \xi \leq - 2 M \\ 1 & -2 M \leq \xi \leq 2 M \\ 1 - \frac{1}{2} (\xi - 2 M)^2 & 2 M \leq \xi \leq 2 M + 1 \\ \frac{1}{2}(\xi - 2 M - 2)^2 & 2 M + 1 \leq \xi \leq 2 M + 2 \\ 0 & \xi \geq 2 M + 2. \end{cases}
\end{equation}
Using the fact that $2 M > L$ by assumption, we have that $k_1(\xi)$ acts by $1$ over the whole interval $[-L,L]$. For positive $\alpha$, we now define
\begin{equation}
    k_\alpha(\xi) := k_1(\alpha \xi).
\end{equation}
It is easy to see that $k_\alpha(\xi)$ acts as $1$ over the interval $\left[-\frac{2 M}{\alpha},\frac{2 M}{\alpha}\right]$ and hence acts as $1$ over the interval $\left[-\frac{L}{\alpha},\frac{L}{\alpha}\right]$, which contains $[-L,L]$ for $0 < \alpha < 1$. The support of $k_\alpha(\xi)$ is clearly confined to $\left[ - \frac{2M+2}{\alpha} , \frac{2 M + 2}{\alpha} \right]$. Using the definition of $M$ as the smallest integer such that $2 M > L$ we see that $L \leq 2 M \leq L + 2 \leq 2 M + 2 \leq L + 4$ and hence the support of $k_\alpha(\xi)$ is confined to $\left[ - \frac{ 2 M + 4 }{ \alpha } , \frac{ 2 M + 4 }{ \alpha } \right]$.

We will now prove that $k_\alpha(\xi)$ satisfies the bound \eqref{eq:kXH} using Lemma \ref{lem:second}. Our strategy is to build up to a bound on the Fourier transform of $k_\alpha'(\xi)$ from a bound on the Fourier transform of $k'(\xi)$.

We start by noting that if $k(\xi)$ is defined by \eqref{eq:f_3}, then
\begin{equation}
    k'(\xi) = \begin{cases} 0 & \xi \leq -2 \\ \xi + 2 & -2 < \xi \leq -1 \\ - \xi & -1 < \xi \leq 1 \\ \xi - 2 & 1 < \xi \leq 2 \\ 0 & 2 \leq \xi. \end{cases}
\end{equation}
Since $k'(\xi)$ is odd, its Fourier transform $\ell(t)$ equals
\begin{equation}
    \ell(t) := - \frac{i}{\pi} \inty{0}{\infty}{ k'(\xi) \sin(t \xi) }{\xi} = - \frac{i}{\pi} \left[ \inty{0}{1}{ (- \xi) \sin(t \xi) }{\xi} + \inty{1}{2}{ (\xi - 2) \sin(t \xi) }{\xi} \right].
\end{equation}
Integrating by parts in these integrals we have
\begin{equation}
    \inty{0}{1}{ (- \xi) \sin(t \xi) }{\xi} = \frac{ \cos(t) }{ t } - \frac{ \sin(t) }{ t^2 }
\end{equation}
\begin{equation}
    \inty{1}{2}{ (\xi - 2) \sin(t \xi) }{\xi} = - \frac{ \cos(t) }{ t } + \frac{ \sin(2 t) }{ t^2 } - \frac{ \sin(t) }{ t^2 }
\end{equation}
and hence
\begin{equation} \label{eq:g}
    \ell(t) = - \frac{i}{\pi} \frac{ \sin(2 t) - 2 \sin(t) }{ t^2 },
\end{equation}
which is clearly in $L^1(\field{R})$. In fact, numerical computation shows that $\| \ell \|_1 \approx 1.27$ (3sf). Now let $\ell_1(t)$ denote the Fourier transform of $k_1'(\xi)$. Using linearity and a change of variables we have
\begin{equation}
    \ell_1(t) = \sum_{l = - M}^M e^{- 2 i l t} \ell(t).
\end{equation}
Using the triangle inequality we have
\begin{equation}
    \| \ell_1 \|_{L^1} \leq (2 M + 1) \| \ell \|_{L^1}.
\end{equation}
Finally, let $\ell_\alpha(t)$ denote the Fourier transform of $k_\alpha'(\xi)$. Since $k_\alpha'(\xi) = \alpha k_1'(\alpha \xi)$, we see that
\begin{equation}
    \ell_\alpha(t) = \frac{1}{2 \pi} \inty{-\infty}{\infty}{ \alpha k_1'(\alpha \xi) e^{- i t \xi} }{\xi} = \ell_1 \left(\frac{t}{\alpha}\right),
\end{equation}
from which it follows immediately that
\begin{equation}
    \| \ell_\alpha(t) \|_{L^1} = \alpha \| \ell_1(t) \|_{L^1} \leq (2 M + 1) \alpha \| \ell(t) \|_{L^1}.
\end{equation}
Applying Lemma \ref{lem:second} now proves Proposition \ref{prop:k_alpha}. 
\end{proof}

\printbibliography

\end{document}
